\def\SU{\mathop{\rm SU}}
\newcommand{\frS}{\mathfrak{S}}
\def \be {\begin{equation}}
\def \ee {\end{equation}}
\newcommand{\Tr}{\mathrm{Tr}}
\def \MSE{\mathop{\rm MSE}\nolimits}
\def \polylog{\mathop{\rm polylog}\nolimits}
\def \mix{\mathop{\rm mix}\nolimits}
\def \cH{{\cal H}}
\def \sofc2{{\cal S}({\mathbb C}^2)}
\newtheorem{theorem}{Theorem}
\newtheorem{lemma}[theorem]{Lemma}
\newtheorem{proposition}[theorem]{Proposition}
\newtheorem{remark}[theorem]{Remark}
\def\Label#1{\label{#1}\ [\ \text{#1}\ ]\ }
\def\Label{\label}
\def\blambda{\bm{\lambda}}
\begin{document}

\title{Measuring quantum relative entropy with finite-size effect}

\author{Masahito Hayashi}
\address{School of Data Science, The Chinese University of Hong Kong,
Shenzhen, Longgang District, Shenzhen, 518172, China}
\address{International Quantum Academy, Futian District, Shenzhen 518048, China}
\address{Graduate School of Mathematics, Nagoya University, Furo-cho, Chikusa-ku, Nagoya, 464-8602, Japan}
\orcid{0000-0003-3104-1000}
\email{hmasahito@cuhk.edu.cn, masahito@math.nagoya-u.ac.jp}
\maketitle

\begin{abstract}
We study the estimation of relative entropy $D(\rho\|\sigma)$
when $\sigma$ is known.
We show that the Cram\'{e}r-Rao type bound equals the relative varentropy.
Our estimator attains the Cram\'{e}r-Rao type bound when 
the dimension $d$ is fixed.
It also achieves the sample complexity $O(d^2)$ when the dimension $d$ increases.
This sample complexity is optimal when $\sigma$ is the completely mixed state.
Also, it has time complexity $O(d^6 \polylog d)$.
Our proposed estimator unifiedly works under both settings.
\end{abstract}

\keywords{relative entropy; sample complexity; Cramer-Rao type bound; Schur composition}

\section{Introduction}
Distinguishing a given known state $\sigma$ from another state $\rho$ is a fundamental task in quantum information.
When the state $\rho$ is known as well as $\sigma$ 
and $n$ copies of the unknown state are prepared,
quantum Stein's lemma \cite{HP,ON} guarantees that 
the quantum relative entropy $D(\rho\|\sigma)
:= \Tr \rho (\log \rho -\log \sigma)$ gives
the limitation of the optimal discrimination power.
The measurement 
achieving the optimal performance depends only on the state $\sigma$ \cite{H-02}.
Further, quantum Sanov theorem \cite{Bjelakovic,Notzel} shows that 
the construction of the test 
achieving the optimal performance depends only on the state $\sigma$ and
the value of the quantum relative entropy $D(\rho\|\sigma)$.
That is, when we construct the test achieving the optimal performance,
the information of the quantum relative entropy $D(\rho\|\sigma)$ is needed
in addition to the information of $\sigma$. 
Although the papers \cite{AISW,WZ} studied the estimation of von Neumann entropy and 
R\'{e}nyi entropy, they did not study the estimation of the quantum relative entropy $D(\rho\|\sigma)$ with the known state $\sigma$.
Due to the non-commutativity between $\rho$ and $\sigma$,
the method in \cite{AISW,WZ} cannot be applied to the estimation of 
the quantum relative entropy $D(\rho\|\sigma)$ even when the state $\sigma$ is known.
In particular, the simple use of Schur sampling \cite{KW01,CM,CHM,OW,HM02a,HM02b}
does not work for the estimation of 
the quantum relative entropy $D(\rho\|\sigma)$
although the paper \cite{AISW} employs Schur sampling.
Therefore, this paper studies how to estimate 
the quantum relative entropy $D(\rho\|\sigma)$ with $n$ copies of the unknown state $\rho$
when $\sigma$ is known.

In fact, when we intend to prepare a specific state $\sigma$,
it is crucial to identify the difference between 
the known target state $\sigma$ and the unknown actual real state 
$\rho$.
In this situation, we often focus on the fidelity $F(\rho,\sigma)
:=\Tr |\sqrt{\rho}\sqrt{\sigma}|$ for this aim.
Many papers \cite{Ha09-2,MH15-7,PLM,ZH19,LZH} studied this problem
when $\sigma$ is a pure state.
However, no paper focuses on this topic 
when $\sigma$ is a mixed state.
Instead of the fidelity, one can evaluate the difference between
$\sigma$ and $\rho$ by using the quantum relative entropy $D(\rho\|\sigma)$
because of the bounding inequality $-2\log F(\rho,\sigma)\le 
D(\rho\|\sigma)$ \cite[Eq.(3.49)]{Springer}.
That is, our problem works for evaluating this difference.

The estimation of $D(\rho\|\sigma)$ with the unknown state $\rho$
is formulated as a special case of the estimation of a function output with nuisance parameters.
This general problem setting has been studied in 
various researches \cite{WSU,YCH,SYH,DGG,GC,TAD}.
The reference \cite[Section III]{TAD} addressed the estimation problem 
of $D(\rho\|\sigma)$ as a special case, and
showed that the estimation error is lower bounded by 
the relative varentropy
$V(\rho\|\sigma):= \Tr \rho ( 
(\log \rho -\log \sigma)-D(\rho\|\sigma))^2$, which appears
in the second-order theory of hypothesis testing \cite{TH,KLi}
while it did not show its attainability.
Under a general setting, the reference \cite[Section 4.3]{SYH} showed that
the optimal precision of the estimation  
is given by the Cram\'{e}r-Rao type bound $C$ with nuisance parameters, which equals the inverse of
the projected symmetric logarithmic derivative (SLD) Fisher information
in the one-parameter case.
The meaning of the Cram\'{e}r-Rao type bound $C$ is the following:
when $n$ copies of the unknown state $\rho$ is prepared
under a fixed model,
the mean square of our best estimator behaves as
$\frac{C}{n}+o(\frac{1}{n}) $.
When we have no nuisance parameter,
the recent paper \cite{NF} showed the existence of an estimator whose mean square error (MSE) equals
$\frac{C}{n}$ without additional term under special models.
However, such an estimator does not exist in general.

In this paper, as the first step, using the results for estimation with nuisance parameters in \cite[Section 4.3]{SYH}, 
we show that the Cram\'{e}r-Rao type bound $C$
of this problem equals
the relative varentropy
$V(\rho\|\sigma):= \Tr \rho ( 
(\log \rho -\log \sigma)-D(\rho\|\sigma))^2$, which was obtained as a lower bound in \cite{TAD}.
In practice, it is necessary to find an estimator to achieve the 
MSE $\frac{C}{n}+o(\frac{1}{n}) $. 
In addition, it is also desired to clearly evaluate 
the difference between its mean square error and the Cram\'{e}r-Rao type bound $\frac{C}{n}$.
Since our problem is one-parameter estimation,
it is possible to achieve the above bound without the use of 
collective measurements across several quantum systems. 
However, this method requires an adaptive choice of measurement.
A typical method is the two-step estimator, consisting of a pre-estimation stage and a local optimal estimation stage.
When we have $n$ copies of the unknown state $\rho$,
theoretically, in order to achieve the optimal bound,
it is sufficient to allocate $\sqrt{n}$ copies to the 
pre-estimation stage \cite{Ha11-2,YCH}.
Although this method achieves the optimal mean square error 
in the order $\frac{1}{n}$, i.e., the first order,
it is not so clear how the mean square error is close to the leading term
$\frac{C}{n}$.
From the practical viewpoint, it is necessary to upper bound this difference by using the number $n$ of unknown copies.
Unfortunately, the above method has difficulty to evaluate 
the difference between its mean square error and 
$\frac{C}{n}$.
The papers \cite{Fujiwara,RPB} also proposed different adaptive methods for parameter estimation, but their methods have the same problem. 
In addition, the above adaptive approach does not work when the dimension $d$ of the quantum system 
increases in addition to the number $n$ of copies. 

In summary, 
to account for finite-size effects in our estimation,
we need to find an estimator of the quantum relative entropy $D(\rho\|\sigma)$
to satisfy the following conditions.
It asymptotically attains the Cram\'{e}r-Rao type bound $V(\rho\|\sigma)$ in the sense of the order $\frac{1}{n}$ when the dimension $d$ is fixed.
The difference between its mean square error and the Cram\'{e}r-Rao type bound is evaluated in terms of
the number $n$ of copies.
Furthermore, the estimator performs well 
even when the dimension $d$ of the quantum system 
increases in addition to the number $n$ of the prepared copies. 

As the second step of this paper, to satisfy the above requirements,
we propose an alternative estimation method by using Schur transform \cite{BCH}.
Then, we evaluate its mean square error and 
compare it to the Cram\'{e}r-Rao type bound with a finite number $n$.
In addition, we evaluate the tail probability. 

As the third step, similar to the papers \cite{AISW,WZ},
we study the case when the dimension $d$ increases when $n$ increases.
Unlike the estimation of von Neumann entropy,
the range of the quantum relative entropy or
the relative varentropy is not bounded even with fixed dimension $d$.
Therefore, to study the sample complexity, 
we need to restrict the possible range of $\rho$ and $\sigma$.
We show that the sample complexity of our method is $O(d^2)$
under a certain natural restriction for this range.
Our method extensively improves the sample complexity over the simple application of the full tomography \cite{HHJ}.
In particular, the restriction of the possible range of $\sigma$
is crucial in this analysis.
We also show that our time complexity is $O(d^6\polylog d)$.

When $\sigma$ is fixed to the completely mixed state,
our problem is the same as the estimation of 
von Neumann entropy, which was studied in the papers \cite{AISW,WZ}.
Our result contains the case studied in the papers \cite{AISW,WZ}.
In this special case, the paper \cite{AISW} also showed that the lower bound of the sample complexity is $O(d^2)$ 
for the Empirical Young Diagram (EYD) algorithm, which is given as
the entropy of the diagonal state determined by the Young indices.
Although this case improves the sample complexity over the simple application of the full tomography, the improvement is incremental. 
The paper \cite{WZ} also showed a lower bound $O(d)$ 
for the sample complexity, 
without restricting our algorithm to EYD algorithm.
In this special case, we show that our measurement can be restricted to 
Schur sampling.

The remaining part of this paper is organized as follows.
First, Section \ref{Sec2} derives the Cram\'{e}r-Rao type bound
for the estimation of quantum relative entropy.
Section \ref{Sec3} presents our estimator for quantum relative entropy,
and its performance.
Then, this section shows that this estimator asymptotically attains
the Cram\'{e}r-Rao type bound when $d$ is fixed.
Section \ref{Sec3B} studies the sample complexity of our method
when $d$ increases.
Section \ref{Sec3C} studies the time complexity of our method
when $d$ increases.
Appendices are devoted to the proofs of the results presented in Section \ref{Sec3}.

\section{Cram\'{e}r-Rao type bound}\Label{Sec2}
To clarify the minimum mean square error 
with the first order $\frac{1}{n}$, 
we discuss the Cram\'{e}r-Rao type bound $C$ 
with nuisance parameters, which 
is mathematically formulated as a lower bound of MSE under the local unbiasedness condition.
The reference \cite[(3.24), (3.52)]{TAD} showed that
the varentropy $V(\rho_*\|\sigma)$ is a lower bound 
under the estimation of $D(\rho\|\sigma)$ at $\rho=\rho_*$.
That is, they showed that
\begin{align}
C \ge V(\rho_*\|\sigma),\Label{NBU}
\end{align}
and did not discuss whether the equality holds.

In the following, we show the equality in \eqref{NBU}.
For this aim, we review the Cram\'{e}r-Rao type bound $C$ for a general function 
$f(\rho)$ in the full parameter model under a $d$-dimensional system as follows.
We choose a parameterization 
$\rho_\theta$ with $\theta =(\theta^1, \theta^2, \ldots, \theta^{d^2-1})
\in \mathbb{R}^{d^2-1}$ such that
the first parameter gives the value of the function to be estimated
and the other parameters are orthogonal to the first parameter at 
$\rho=\rho_*$.
That is, our parameter needs to satisfy the following conditions.
\begin{align}
\rho_{\theta_0}&=\rho_*, 
\Label{Con1}\\
\frac{\partial f(\rho_{\theta})}{\partial \theta^1}\Big|_{\theta=\theta_0}&=1, \Label{Con2}\\
\frac{\partial f(\rho_{\theta})}{\partial \theta^j}\Big|_{\theta=\theta_0}&=0\Label{Con3}
\end{align}
for $j=2, \ldots, d^2-1$, where 
$\theta_0:= (\theta^1_0 ,0,\ldots,0)$,
and 
$\theta^1_0:=f(\rho_*)$.

For the latter discussion, 
we introduce the symmetric logarithmic derivative (SLD) inner product.
Given two Hermitian matrices $X,Y$, we define 
$X \circ Y:=( XY+YX)/2$.
Then, we define the SLD inner product 
$\langle X,Y\rangle_{\rho}:=
\Tr (\rho \circ X) Y=\Tr \rho (X\circ Y)$.
We define the SLD norm $\|X\|_\rho:=
\sqrt{\langle X,X\rangle_{\rho}}$.
We choose the SLD $L_1, \ldots, L_{d^2-1}$ as
\begin{align}
\frac{\partial \rho_{\theta}}{\partial \theta^j}\Big|_{\theta=\theta_0}
=\rho_* \circ L_j 
\end{align}
for $j=1, \ldots, d^2-1$.
We also assume the following orthogonal condition;
\begin{align}
\langle L_1,L_j\rangle_{\rho_*}=
\Tr 
\frac{\partial \rho_{\theta}}{\partial \theta^j}\Big|_{\theta=\theta_0}
L_1 =0\Label{Con3-2}
\end{align}
for $j=2, \ldots, d^2-1$.
Then, we can calculate the Cram\'{e}r-Rao type bound, i.e.,
the minimum mean square error under the local unbiasedness condition
even under the full parameter model.

\begin{proposition}[\protect{\cite[Theorem 5.3]{SYH}}]\Label{PP1}
Assume the conditions \eqref{Con1}--\eqref{Con3-2}.
The Cram\'{e}r-Rao type bound for the estimation of $f(\rho_\theta)$
is $\langle L_1,L_1\rangle_{\rho_*}^{-1}$ at $\theta=\theta_0$.
\end{proposition}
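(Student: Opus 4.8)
The plan is to read off the result from the general nuisance-parameter Cram\'{e}r-Rao theory of \cite[Theorem 5.3]{SYH} and then let the orthogonality hypotheses collapse the formula. First I would note that conditions \eqref{Con2} and \eqref{Con3} say exactly that the gradient of the scalar map $\theta\mapsto f(\rho_\theta)$ at $\theta_0$ equals the first coordinate vector $(1,0,\ldots,0)\in\mathbb{R}^{d^2-1}$; hence, in a neighbourhood of $\theta_0$, estimating $f(\rho_\theta)$ coincides with estimating the parameter of interest $\theta^1$, with $\theta^2,\ldots,\theta^{d^2-1}$ acting as nuisance parameters. Applying \cite[Theorem 5.3]{SYH} to the full parameter model $\rho_\theta$ then identifies the Cram\'{e}r-Rao type bound $C$ with the inverse of the projected SLD Fisher information of $\theta^1$, that is, $C=\big(J_{11}-\sum_{i,j=2}^{d^2-1}J_{1i}(\tilde J^{-1})_{ij}J_{j1}\big)^{-1}=(J^{-1})_{11}$, where $J_{ij}:=\langle L_i,L_j\rangle_{\rho_*}$ is the SLD Fisher information matrix at $\theta_0$ (with $\partial\rho_\theta/\partial\theta^j|_{\theta_0}=\rho_*\circ L_j$) and $\tilde J:=(J_{ij})_{i,j=2}^{d^2-1}$ is its nuisance block.

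Next I would invoke \eqref{Con3-2}, which states precisely that $J_{1j}=\langle L_1,L_j\rangle_{\rho_*}=0$ for $j=2,\ldots,d^2-1$; by symmetry of the SLD inner product $J_{j1}=0$ as well, so $J$ is block diagonal, $J=J_{11}\oplus\tilde J$. The correction term in the Schur-complement expression then vanishes, the projected SLD Fisher information reduces to $J_{11}=\langle L_1,L_1\rangle_{\rho_*}$, and $C=(J^{-1})_{11}=J_{11}^{-1}=\langle L_1,L_1\rangle_{\rho_*}^{-1}$, which is the asserted bound at $\theta=\theta_0$.

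I expect the only genuine subtlety to lie in matching the hypotheses to \cite[Theorem 5.3]{SYH}: one must confirm that the Cram\'{e}r-Rao type bound there, namely the infimum of the mean square error over locally unbiased estimators, is in the present case of a one-dimensional parameter of interest indeed given by the (projected) SLD Fisher information, and that the SLD $L_1$ of our parameterization is exactly the one entering that bound, which is guaranteed by the normalization \eqref{Con2} together with the defining relation $\partial\rho_\theta/\partial\theta^1|_{\theta_0}=\rho_*\circ L_1$. One should also record that $\tilde J$ is invertible --- this holds because $\rho_*>0$ and the derivatives $\partial\rho_\theta/\partial\theta^j|_{\theta_0}$ can be taken linearly independent in the full parameter model, making $J$ and hence $\tilde J$ positive definite --- so that the Schur-complement step is legitimate. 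Once these points are in place, the proof is a single substitution.
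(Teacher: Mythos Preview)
The paper does not prove Proposition~\ref{PP1}; it is quoted directly as \cite[Theorem 5.3]{SYH} and used as a black box. Your proposal is a correct unpacking of that citation: you identify $\theta^1$ as the parameter of interest and $\theta^2,\ldots,\theta^{d^2-1}$ as nuisance parameters via \eqref{Con2}--\eqref{Con3}, invoke the projected SLD Fisher information formula from \cite{SYH}, and then use the orthogonality \eqref{Con3-2} to collapse the Schur complement to $J_{11}$. This is exactly the reasoning the citation encodes, so there is nothing to compare --- the paper simply defers to the reference, and your write-up fills in what that reference says in the present notation.
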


To calculate the Cram\'{e}r-Rao type bound for $D(\rho\|\sigma)$ at $\rho=\rho_*$,
we choose the Hermitian matrix 
\begin{align}
L^1:= ((\log \rho_*-\log \sigma) - D(\rho_*\|\sigma)I),\Label{CB9}
\end{align}
and $d^2-1$ linearly independent Hermitian traceless matrices
$X_1,X_2, \ldots X_{d^2-1}$ such that
\begin{align}
X_1 &= V(\rho_*\|\sigma)^{-1}  \rho_* \circ L^1
 \Label{Con4}\\
\Tr X_j L^1 &=0\Label{Con5}
\end{align}
for $j=2, \ldots, d^2-1$.
Then, we choose our 
parameterization
$\rho_\theta$ as 
\begin{align}
\rho_\theta:= \rho_*+ (\theta^1-\theta^1_0 )X_1+
\sum_{j=2}^{d^2-1}\theta^j X_j.
\end{align}
This parameterization satisfies the condition \eqref{Con1}.
Since 
$\frac{\partial \rho_{\theta}}{\partial \theta^j}\Big|_{\theta=\theta_0}
=X_j$, 
we have
\begin{align}
&\frac{\partial D(\rho_{\theta}\|\sigma)}{\partial \theta^1}
\Big|_{\theta=\theta_0}\notag\\
=&
\Tr 
\frac{\partial \rho_\theta }{\partial \theta^1}\Big|_{\theta=\theta_0}
(\log \rho_*-\log \sigma)\notag\\
&+
\Tr \rho_* 
\frac{\partial \log \rho_\theta }{\partial \theta^1}\Big|_{\theta=\theta_0} \notag\\
\stackrel{(a)}{=} &
\Tr 
X_1
(\log \rho_*-\log \sigma)+
\Tr X_1\notag\\
\stackrel{(b)}{=} &
\Tr X_1 L^1
+\Tr X_1 D(\rho_*\| \sigma) I
+\Tr X_1
\stackrel{(c)}{=}1 \Label{ConT}.
\end{align}
Here, $(b)$ and $(c)$ follow from \eqref{CB9} and
the pair of \eqref{Con4} and the relation $\langle L^1,L^1\rangle_{\rho_*}= V(\rho_*\|\sigma)$, respectively.
In addition,
$(a)$ can be shown as follows.
\begin{align}
&\Tr \rho_{\theta_0}
\frac{\partial \log \rho_\theta }{\partial \theta^1}\Big|_{\theta=\theta_0} \notag\\
\stackrel{(a)}{=}&
\Tr \int_{0}^1 \rho_{\theta_0}^{s}
\frac{\partial \log \rho_\theta }{\partial \theta^1}\Big|_{\theta=\theta_0} 
\rho_{\theta_0}^{1-s} ds
\notag\\
=&\Tr \frac{\partial \rho_\theta }{\partial \theta^1}\Big|_{\theta=\theta_0} 
=\Tr X_1\Label{ConX},
\end{align}
where $(a)$ follows from \cite[Eq. (6.32)]{Springer}.

Using \eqref{Con5} and a similar discussion, we have
\begin{align}
\frac{\partial D(\rho_{\theta}\|\sigma)}{\partial \theta^j}
\Big|_{\theta=\theta_0}
=&
\Tr X_j
(\log \rho_*-\log \sigma)+
\Tr X_j\notag\\
=&0\Label{ConY}.
\end{align}
Then, \eqref{ConX} and \eqref{ConY}
imply the conditions \eqref{Con2} and \eqref{Con3}.

Due to \eqref{Con4} and \eqref{Con5}, we have
$L_1=V(\rho_*\|\sigma)^{-1} L^1$ and 
\begin{align}
\Tr \frac{\partial \rho_\theta}{\partial \theta^j} L_1 =
\Tr X_j L_1=0 \hbox{ for }j=2, \ldots, d^2-1,
\end{align}
which implies the condition \eqref{Con3-2}.
Therefore, 
since all the conditions \eqref{Con1} -- \eqref{Con3-2} hold,
due to Proposition \ref{PP1},
the Cram\'{e}r-Rao type bound for the estimation of 
$D(\rho \|\sigma)$
is $\langle L_1,L_1\rangle_{\rho_*}^{-1}= 
V(\rho_*\|\sigma)
$ at $\theta=\theta_0$.
That is,
the minimum mean square error under the local unbiasedness condition
is $ V(\rho_*\|\sigma)$.
That is,
the Cram\'{e}r-Rao type bound for the estimation of 
$D(\rho \|\sigma)$
is the relative varentropy $ V(\rho_*\|\sigma)$.
Thus, the quality in \eqref{NBU} holds.

\section{Estimator based on Schur transform}\Label{Sec3}
Similar to the papers \cite{H-01,H-02,KW01,CM,CHM,OW,HM02a,HM02b,Notzel,H-24,H-q-text},
we employ Schur duality of $\cH^{\otimes n}$. 
For this aim, we prepare several notations.
A sequence of monotone-increasing non-negative integers 
$\blambda=(\lambda_1,\ldots,\lambda_d)$ is called Young index.
Although many references \cite{H-q-text,Group1,GW} define Young index as 
monotone-decreasing non-negative integers,
we define it in the opposite way for notational convenience. 
We denote the set of Young indices $\blambda$ with the condition
$\sum_{j=1}^d\lambda_j=n$ by $Y_d^n$ .
Then, as explained in \cite[Section 6.2]{H-q-text}, we have
\begin{align}
\cH^{\otimes n}=
\bigoplus_{\blambda \in Y_d^n}
{\cal U}_{\blambda} \otimes {\cal V}_{\blambda}.
\end{align}
Here,
${\cal U}_{\blambda}$ expresses the irreducible space of $\SU(d)$
and ${\cal V}_{\blambda}$ expresses the irreducible space of  
the representation $\pi$ of the permutation group $\frS_n$.
We define 
$d_{\blambda}:= \dim {\cal U}_{\blambda}$,
$d_{\blambda}':= \dim {\cal V}_{\blambda}$.
Then, for states $\sigma$, $\rho$, we have
\begin{align}
\rho^{\otimes n}&=
\sum_{\blambda \in Y_d^n}
\rho_{\blambda} \otimes \rho_{\blambda,\mix}, \Label{BBR}\\
\sigma^{\otimes n}&=
\sum_{\blambda \in Y_d^n}
\sigma_{\blambda} \otimes \rho_{\blambda,\mix}, \Label{BBR2}
\end{align}
where $\rho_{\blambda,\mix}$ is the completely mixed state on ${\cal V}_{\blambda}$.
In the above equations, we consider that 
matrices on the subspace ${\cal U}_{\blambda} \otimes {\cal V}_{\blambda} $
are matrices on the whole space $\cH^{\otimes n}$.
In the first step, we apply the Schur transform $U_{Schur}$, which realizes
a basis change to block-diagonalize $U^{\otimes n}$ for an arbitrary unitary $U$.
It is known that the Schur transform $U_{Schur}$ can be
efficiently done \cite{BCH}.
We choose a basis 
$\{|u_{\blambda,j}\rangle\}_{j=1}^{d_{\blambda}}$ 
of ${\cal U}_{\blambda}$
such that $|u_{\blambda,j}\rangle$ is an eigenvector of $\sigma_{\blambda} $.

To estimate the quantum relative entropy $D(\rho\|\sigma)$, we employ the POVM $M=\{M_{\blambda,j}\}_{\blambda,j}$ defined as
\begin{align}
M_{\blambda,j}:= |u_{\blambda,j}\rangle \langle u_{\blambda,j}|
\otimes 
I( {\cal V}_{\blambda}),
\end{align}
where $I( {\cal V}_{\blambda})$ is the identity operator on ${\cal V}_{\blambda}$.
The measurement outcome is given as $(\blambda,j)$ so that 
it obeys the probability distribution
\begin{align}
P_{\rho,\sigma}^{(n)}(\blambda,j):=\Tr M_{\blambda,j} \rho^{\otimes n}.
\end{align}
To estimate the quantum relative entropy $D(\rho\|\sigma)$, 
we employ the estimate $x_{\blambda,j}$ defined as
$-\frac{1}{n}\log \Tr M_{\blambda,j} \sigma^{\otimes n}$.
Our estimation protocol is summarized as Protocol \ref{estimation-protocol}.

\begin{algorithm}[H]
\floatname{algorithm}{Protocol}
\caption{Estimation of relative entropy}
\label{estimation-protocol}
    \begin{enumerate}[label=\textbf{Step \arabic*.}]
        \setlength{\leftskip}{12mm}
        \item We apply the Schur transform $U_{Schur}$.
        \item We measure the system by using the measurement $M=\{M_{\blambda,j}\}_{\blambda,j}$.
        \item We set our estimate to be $x_{\blambda,j}:=-\frac{1}{n}\log \Tr M_{\blambda,j} \sigma^{\otimes n}$.
    \end{enumerate}
\end{algorithm}

To evaluate the mean square error of this estimator,
we denote the dimension of 
$\oplus_{\blambda \in Y_d^n}
{\cal U}_{\blambda}$ by $d_{n,d}$.
Since as shown in \cite[(6.16) and (6.18)]{H-q-text}, we have
\begin{align}
\dim {\cal U}_{\blambda} \le (n+1)^{d(d-1)/2},\quad
|Y_d^n|\le (n+1)^{d-1},
\Label{NM1}
\end{align}
the relation
\begin{align}
d_{n,d} \le  
(n+1)^{(d+2)(d-1)/2}\Label{BVT}
\end{align}
holds.

The reason of our choices of 
the POVM $M_{\blambda,j}$ and the estimate $x_{\blambda,j}$ is the following.
The reference \cite{H-01} essentially showed the relation
\begin{align}
& D(\rho^{\otimes n}\|\sigma^{\otimes n}) 
- \frac{d(d-1)}{2}\log (n+1) \notag\\
\le &
D_M(\rho^{\otimes n}\|\sigma^{\otimes n}) 
\le D(\rho^{\otimes n}\|\sigma^{\otimes n}),\Label{NBW}
\end{align}
where
\begin{align*}
&D_M(\rho^{\otimes n}\|\sigma^{\otimes n}) \\
:=& 
\sum_{\blambda,j}
\Tr M_{\blambda,j} \rho^{\otimes n}
(\log \Tr M_{\blambda,j} \rho^{\otimes n}
-\log \Tr M_{\blambda,j} \sigma^{\otimes n}).
\end{align*}
This relation shows that
the POVM $M_{\blambda,j}$ almost preserves the information with respect to the relative entropy
$D(\rho^{\otimes n}\|\sigma^{\otimes n})$.
Further, 
since the Shannon entropy 
$H(P_{\rho,\sigma}^{(n)}(\blambda,j)) 
:=-\sum_{\blambda,j}
\Tr M_{\blambda,j} \rho^{\otimes n}
\log \Tr M_{\blambda,j} \rho^{\otimes n} $ satisfies
\begin{align}
0\le H(P_{\rho,\sigma}^{(n)}(\blambda,j)) \le \log d_{n,d},
\end{align}
the the combination of \eqref{BVT} and \eqref{NBW}
implies
\begin{align}
& D(\rho^{\otimes n}\|\sigma^{\otimes n}) 
-(d+1)(d-1)\log (n+1)
\nonumber \\
\le&-\sum_{\blambda,j}
\Tr M_{\blambda,j} \rho^{\otimes n}
\log \Tr M_{\blambda,j} \sigma^{\otimes n} \notag\\
\le & D(\rho^{\otimes n}\|\sigma^{\otimes n}) .
\end{align}
That is, the average of $-\frac{1}{n}\log \Tr M_{\blambda,j} \sigma^{\otimes n} $
is close to the value $D(\rho\|\sigma)$ to be estimated.
This is the reason why we choose
the estimate $x_{\blambda,j}$.

The mean square error of this estimator is evaluated as follows.
\begin{theorem}\label{THB}
For any states $\rho$ and $\sigma$,
we have
\begin{align}
&\MSE_n(\rho\|\sigma):=\sum_{(\blambda,j)}
P_{\rho,\sigma}^{(n)}(\blambda,j)
(x_{\blambda,j}-D(\rho\|\sigma))^2 \notag\\
\le &
\Big(\frac{1}{\sqrt{n}}\sqrt{V(\rho\|\sigma)}+
\frac{1}{n}\log d_{n,d}\Big)^2.
\Label{BNB}
\end{align}
\end{theorem}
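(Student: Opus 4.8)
The plan is to control the mean square error by splitting it into a bias term and a variance term via the standard decomposition
\begin{align}
\MSE_n(\rho\|\sigma)
= \Big(\mathrm{E}[x_{\blambda,j}] - D(\rho\|\sigma)\Big)^2
+ \mathrm{Var}[x_{\blambda,j}],
\end{align}
where the expectation and variance are taken with respect to $P_{\rho,\sigma}^{(n)}$. Actually, to get the clean form on the right-hand side of \eqref{BNB}, it is cleaner to write $x_{\blambda,j}-D(\rho\|\sigma) = (x_{\blambda,j}-a_n) + (a_n - D(\rho\|\sigma))$ for a suitable deterministic shift $a_n$, apply the triangle inequality in $L^2(P_{\rho,\sigma}^{(n)})$, and then bound the two pieces separately; the square of a sum of two $L^2$ norms is exactly what appears in \eqref{BNB}.

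First I would handle the bias. From the displayed chain of inequalities just before the theorem, the average $-\sum_{\blambda,j} P_{\rho,\sigma}^{(n)}(\blambda,j)\log \Tr M_{\blambda,j}\sigma^{\otimes n}$ lies between $D(\rho^{\otimes n}\|\sigma^{\otimes n}) - (d+1)(d-1)\log(n+1)$ and $D(\rho^{\otimes n}\|\sigma^{\otimes n})$. Dividing by $n$ and using additivity $D(\rho^{\otimes n}\|\sigma^{\otimes n}) = n\,D(\rho\|\sigma)$, the expectation $\mathrm{E}[x_{\blambda,j}]$ lies within $\frac{(d+1)(d-1)}{n}\log(n+1)$ of $D(\rho\|\sigma)$; combined with \eqref{BVT} (so that $\frac{1}{n}\log d_{n,d} \le \frac{(d+2)(d-1)}{2n}\log(n+1)$), one gets a bias bounded by roughly $\frac{1}{n}\log d_{n,d}$, up to constants. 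I should be a little careful to get the constants to match the stated bound exactly — likely the right move is to choose the deterministic shift $a_n$ to be the actual mean $\mathrm{E}[x_{\blambda,j}]$, so the first $L^2$ term is the standard deviation and the second is exactly the bias, and then invoke the above to bound the bias by $\frac{1}{n}\log d_{n,d}$.

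The harder part is the variance term: I must show $\mathrm{Var}[x_{\blambda,j}] \le \frac{1}{n}V(\rho\|\sigma)$, i.e., that the variance of $-\frac{1}{n}\log\Tr M_{\blambda,j}\sigma^{\otimes n}$ under $P_{\rho,\sigma}^{(n)}$ is at most $\frac{1}{n}V(\rho\|\sigma)$. The natural route is to compare $-\log \Tr M_{\blambda,j}\sigma^{\otimes n}$ with the "ideal" random variable $\log \Tr M_{\blambda,j}\rho^{\otimes n} - \log \Tr M_{\blambda,j}\sigma^{\otimes n}$, whose variance under $P_{\rho,\sigma}^{(n)}$ is the classical relative varentropy of the measured distributions, and then to use the quantum-classical monotonicity / the relation that the measurement $M$ is (asymptotically) sufficient for the pair $(\rho^{\otimes n},\sigma^{\otimes n})$ so that this measured varentropy is at most $V(\rho^{\otimes n}\|\sigma^{\otimes n}) = n\,V(\rho\|\sigma)$; the extra $\log\Tr M_{\blambda,j}\rho^{\otimes n}$ term contributes only a lower-order fluctuation that can be absorbed because $-\log\Tr M_{\blambda,j}\rho^{\otimes n}$ has range $O(d^2\log n)$ and is tightly concentrated. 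I expect this variance estimate — establishing that the measured relative varentropy is bounded by $n V(\rho\|\sigma)$ and that replacing $\log\Tr M_{\blambda,j}\rho^{\otimes n}$ by its mean costs nothing at the relevant order — to be the main obstacle, and I anticipate it will be deferred to the appendices as announced in the introduction; here I would state it as the key lemma, reduce Theorem \ref{THB} to it by the triangle-inequality split above, and check the arithmetic of the constants.
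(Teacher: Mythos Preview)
Your proposal has a genuine gap in the variance step. The ``key lemma'' you would defer to the appendix---that the classical relative varentropy of the measured pair $(P_{\rho,\sigma}^{(n)},P_{\sigma,\sigma}^{(n)})$ is bounded by $V(\rho^{\otimes n}\|\sigma^{\otimes n})=nV(\rho\|\sigma)$---is not a consequence of monotonicity: the relative varentropy does \emph{not} satisfy a data-processing inequality (a simple classical coarse-graining can strictly increase it), and asymptotic sufficiency of $M$ gives no finite-$n$ inequality of the required form. So the route ``measured varentropy $\le$ quantum varentropy'' is blocked. A secondary issue is that your bias bound, obtained from the displayed sandwich on $-\sum P\log\Tr M_{\blambda,j}\sigma^{\otimes n}$, gives $|E[x_{\blambda,j}]-D(\rho\|\sigma)|\le \frac{(d^2-1)}{n}\log(n+1)$, which is not in general $\le \frac{1}{n}\log d_{n,d}$; the constants do not close.

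The paper avoids both problems by working at the \emph{operator} level rather than the classical one. Because the basis $|u_{\blambda,j}\rangle$ diagonalizes $\sigma_{\blambda}$, the estimator satisfies the exact identity
\[
x_{\blambda,j}-D(\rho\|\sigma)
=\langle u_{\blambda,j},v_{\blambda,j'}|\,R\,|u_{\blambda,j},v_{\blambda,j'}\rangle,
\qquad
R:=\tfrac{1}{n}\Big(-\log\sigma^{\otimes n}+\sum_{\blambda}\log(I_{{\cal U}_{\blambda}}\!\otimes\rho_{\blambda,\mix})\Big)-D(\rho\|\sigma),
\]
and since $R$ is diagonal in this basis one gets $\MSE_n(\rho\|\sigma)=\Tr\rho^{\otimes n}R^2=\|R\|_{\rho^{\otimes n}}^2$. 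Now add and subtract $\frac{1}{n}\log\rho^{\otimes n}$ \emph{inside the operator} to write $R=A-B$ with
\[
A=\tfrac{1}{n}(\log\rho^{\otimes n}-\log\sigma^{\otimes n})-D(\rho\|\sigma),
\qquad
B=\tfrac{1}{n}\sum_{\blambda}\log\rho_{\blambda}\otimes I_{{\cal V}_{\blambda}}.
\]
The triangle inequality for $\|\cdot\|_{\rho^{\otimes n}}$ then gives the stated bound, because $\|A\|_{\rho^{\otimes n}}^2=\frac{1}{n}V(\rho\|\sigma)$ is an \emph{exact} i.i.d.\ identity (no monotonicity is invoked), and $\|B\|_{\rho^{\otimes n}}^2=\frac{1}{n^2}\Tr\rho_n(\log\rho_n)^2\le \frac{1}{n^2}(\log d_{n,d})^2$ by \cite[Lemma~8]{H-02}. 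The step you are missing is precisely this passage from the classical sum to $\Tr\rho^{\otimes n}R^2$, which is what lets one introduce the non-diagonal operator $\log\rho^{\otimes n}$ and recover $V(\rho\|\sigma)$ with the correct constant.
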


This theorem is proved in Appendix \ref{Sec4}.
The difference between the upper bound \eqref{BNB}
and the leading term $\frac{1}{n}V(\rho\|\sigma)$
is $\frac{1}{n^2}(\log d_{n,d})^2+
\frac{2}{n \sqrt{n}} \sqrt{V(\rho\|\sigma)}\log d_{n,d} $.
Using \eqref{BVT}, this different is upper bounded by
$\frac{1}{n^2} (\frac{(d+2)(d-1)}{2}
\log (n+1))^2+\frac{2}{n \sqrt{n}} 
\sqrt{V(\rho\|\sigma)}
\frac{(d+2)(d-1)}{2} \log (n+1)$.
$n$ times of these terms go to zero.

Next, we focus on the tail probability for our estimator, i.e.,
we study the estimation error probability as
\begin{align}
\delta_{\rho,\sigma}^{(n),+,\epsilon}
&:=P_{\rho,\sigma}^{(n)}
\{(\blambda,j): x_{\blambda,j}-D(\rho\|\sigma)>\epsilon
\} \\
\delta_{\rho,\sigma}^{(n),-,\epsilon}
&:=P_{\rho,\sigma}^{(n)}
\{(\blambda,j): x_{\blambda,j}-D(\rho\|\sigma)<-\epsilon
\} .
\end{align}
When $\epsilon$ behaves as 
$c/\sqrt{n}$, as shown in Appendix \ref{Sec5} 
the probabilities
$\delta_{\rho,\sigma}^{(n),\pm, c/\sqrt{n}}$
converges to a normal distribution as follows.

\begin{theorem}\label{TH3}
For any states $\rho$ and $\sigma$,
we have
\begin{align}
\delta_{\rho,\sigma}^{(n),+, c/\sqrt{n}}
&\to \int_{c/\sqrt{V(\rho\|\sigma)}}^\infty
\frac{1}{\sqrt{2\pi}}e^{-\frac{x^2}{2}}
dx \\
\delta_{\rho,\sigma}^{(n),-, c/\sqrt{n}}
&\to \int^{c/\sqrt{V(\rho\|\sigma)}}_{-\infty}
\frac{1}{\sqrt{2\pi}}
e^{-\frac{x^2}{2}} dx.
\end{align}
\end{theorem}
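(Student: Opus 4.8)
The plan is to recognize the estimate $x_{\blambda,j} = -\frac{1}{n}\log \Tr M_{\blambda,j}\sigma^{\otimes n}$ as an empirical-mean-type random variable under $P_{\rho,\sigma}^{(n)}$ and to invoke a central limit theorem. The key identity I would exploit is that, because $M_{\blambda,j}$ is a rank-one projector on ${\cal U}_{\blambda}$ tensored with the full identity on ${\cal V}_{\blambda}$, the quantity $\Tr M_{\blambda,j}\sigma^{\otimes n}$ factorizes, and likewise for $\rho^{\otimes n}$. Concretely, first I would reduce to the "pretty good measurement" structure: $\Tr M_{\blambda,j}\sigma^{\otimes n} = d'_{\blambda}\,\langle u_{\blambda,j}|\sigma_{\blambda}|u_{\blambda,j}\rangle$ and $P_{\rho,\sigma}^{(n)}(\blambda,j) = d'_{\blambda}\,\langle u_{\blambda,j}|\rho_{\blambda}|u_{\blambda,j}\rangle$ (using \eqref{BBR}, \eqref{BBR2}). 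Then I would show that $n\big(x_{\blambda,j}-D(\rho\|\sigma)\big) = -\log\Tr M_{\blambda,j}\sigma^{\otimes n} - nD(\rho\|\sigma)$ is, up to a vanishing $O(\log n)$ correction coming from the $d'_\blambda$ and $d_\blambda$ prefactors (controlled exactly by the polynomial bounds \eqref{NM1}), equal to a sum of $n$ contributions — this is where the i.i.d. structure enters, essentially via the quantum Stein/Sanov estimates of \cite{H-01} that already underlie \eqref{NBW}.

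Second, having isolated the fluctuating part, I would identify its limiting law. The natural route is the information-spectrum / second-order asymptotics machinery for hypothesis testing: the random variable $\frac{1}{\sqrt n}\big(-\log\Tr M_{\blambda,j}\sigma^{\otimes n} - nD(\rho\|\sigma)\big)$ under $P_{\rho,\sigma}^{(n)}$ converges in distribution to a centered Gaussian with variance $V(\rho\|\sigma)$. This is precisely the content of the second-order theory cited in the introduction (\cite{TH,KLi} — the relative varentropy is the variance appearing there), applied to the measured distribution; one must check that the measurement $M$ is "good enough" that the measured relative-entropy variance matches $V(\rho\|\sigma)$, which follows from the sandwich \eqref{NBW} together with a matching second-order statement (the $\frac{d(d-1)}{2}\log(n+1)$ gap is $o(\sqrt n)$, hence invisible at the CLT scale). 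Once the CLT for $\sqrt n(x_{\blambda,j}-D(\rho\|\sigma))$ is in hand, setting $\epsilon = c/\sqrt n$ gives
\begin{align}
\delta_{\rho,\sigma}^{(n),+,c/\sqrt n}
= P_{\rho,\sigma}^{(n)}\Big\{\sqrt n\,(x_{\blambda,j}-D(\rho\|\sigma)) > c\Big\}
\to \int_{c/\sqrt{V(\rho\|\sigma)}}^{\infty}\frac{1}{\sqrt{2\pi}}e^{-x^2/2}\,dx,
\end{align}
by the standard change of variables $x \mapsto x/\sqrt{V(\rho\|\sigma)}$, and symmetrically for $\delta_{\rho,\sigma}^{(n),-,c/\sqrt n}$ using the complementary tail. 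The convergence at the single point $c/\sqrt{V(\rho\|\sigma)}$ is legitimate because the limiting distribution function is continuous.

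The main obstacle I anticipate is establishing the CLT rigorously for the \emph{measured} sequence rather than for a genuine i.i.d. sum: the outcomes $(\blambda,j)$ do not split as an $n$-fold product of a fixed distribution, so one cannot directly quote the classical Lindeberg–Lévy theorem. The cleanest fix is to transfer the problem to the likelihood-ratio random variable $\log\frac{d P_{\rho}^{(n)}}{d P_{\sigma}^{(n)}}$ for the two i.i.d. sources $\rho^{\otimes n},\sigma^{\otimes n}$ — for which the CLT with variance $V(\rho\|\sigma)$ is classical — and then argue that replacing this by the measured quantity $-\log\Tr M_{\blambda,j}\sigma^{\otimes n}$ (plus the bookkeeping terms) changes things only by $o_P(\sqrt n)$; the uniform polynomial bounds \eqref{NM1} on $d_\blambda$ and $|Y_d^n|$ are exactly what make these correction terms $O(\log n)$ and hence asymptotically negligible. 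A secondary technical point is uniform integrability / concentration to rule out mass escaping to infinity, but the Gaussian tail of the limit plus the deterministic two-sided bound \eqref{NBW} already pins the relevant probabilities, so this should be routine. I would organize Appendix \ref{Sec5} around these two reductions.
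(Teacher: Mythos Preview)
Your high-level plan --- decompose $\sqrt{n}\,(x_{\blambda,j}-D(\rho\|\sigma))$ into a log-likelihood-ratio piece governed by the second-order result of \cite{TH} plus a remainder that vanishes in probability --- is exactly what the paper does in Appendix~\ref{Sec5}. The paper makes the first piece precise via the pinching map $\Gamma_{\sigma,n}$: after introducing an auxiliary index $j'$ on ${\cal V}_{\blambda}$, the main term is identified with $\sqrt{n}\big(\tfrac{1}{n}(\log\Gamma_{\sigma,n}(\rho^{\otimes n})-\log\sigma^{\otimes n})-D(\rho\|\sigma)\big)$ under the state $\Gamma_{\sigma,n}(\rho^{\otimes n})$, and Section~V of \cite{TH} is invoked directly for its Gaussian limit with variance $V(\rho\|\sigma)$.

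Your treatment of the remainder, however, has a concrete gap. First, a bookkeeping error: $\Tr M_{\blambda,j}\sigma^{\otimes n}=\langle u_{\blambda,j}|\sigma_{\blambda}|u_{\blambda,j}\rangle$ and $P_{\rho,\sigma}^{(n)}(\blambda,j)=\langle u_{\blambda,j}|\rho_{\blambda}|u_{\blambda,j}\rangle$, with \emph{no} $d'_{\blambda}$ prefactor (since $\Tr\rho_{\blambda,\mix}=1$). More substantively, after the decomposition of Appendix~\ref{Sec4} the remainder is the random variable $-\tfrac{1}{n}\log\langle u_{\blambda,j}|\rho_{\blambda}|u_{\blambda,j}\rangle=-\tfrac{1}{n}\log P_{\rho,\sigma}^{(n)}(\blambda,j)$, i.e.\ the self-information of the outcome. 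This is \emph{not} deterministically $O((\log n)/n)$: for atypical $(\blambda,j)$ the value $-\log P$ can be of order $n$. Note also that \eqref{NM1} bounds $d_{\blambda}=\dim{\cal U}_{\blambda}$ and $|Y_d^n|$ but not $d'_{\blambda}=\dim{\cal V}_{\blambda}$, which is typically exponential in $n$; in any case no dimension prefactor survives the decomposition. What actually kills the remainder is a second-moment bound,
\[
\sum_{(\blambda,j)}P_{\rho,\sigma}^{(n)}(\blambda,j)\big(\log P_{\rho,\sigma}^{(n)}(\blambda,j)\big)^{2}=\Tr\rho_n(\log\rho_n)^{2}\le(\log d_{n,d})^{2}
\]
via \cite[Lemma~8]{H-02}, so that $\tfrac{1}{\sqrt{n}}\log P_{\rho,\sigma}^{(n)}(\blambda,j)\to 0$ in $L^{2}$ and hence in probability. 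The polynomial count \eqref{BVT} enters only to bound $d_{n,d}$, not as a pointwise bound on individual corrections. Your later remark that the discrepancy should be $o_{P}(\sqrt{n})$ is the right target; the missing ingredient is this second-moment bound, not a deterministic prefactor estimate.
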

This property is called local asymptotic normality.

When the error term $\epsilon$ is constant, 
the tail probability goes to zero exponentially.
We use the sandwich relative entropy
$D_{\alpha}(\sigma\|\rho):=\frac{1}{\alpha-1}\log 
\Tr (\rho^{\frac{1-\alpha}{2\alpha}}
\sigma \rho^{\frac{1-\alpha}{2\alpha}})^\alpha$.
As shown in Appendix \ref{Sec6},
using the method in Section X of \cite{H-24}, we have the following theorem.

\begin{theorem}\label{TH1}
For any states $\rho$, $\sigma$, and $R,r>0$,
we have
\begin{align}
\delta_{\rho,\sigma}^{(n),+, R-D(\rho\|\sigma)}
\le & (d_{n,d})^{\alpha}e^{-n \alpha (D_{1-\alpha}(\rho\|\sigma)-R)},\Label{BN1}
\\
\delta_{\rho,\sigma}^{(n),-, R-D(\rho\|\sigma)}
\le & e^{-n (R-r-\alpha (D_{1+\alpha}(\rho\|\sigma) )}\notag\\
&+d_{n,d} e^{-nr} \Label{BN2}.
\end{align}
In \eqref{BN1}, $\alpha\in (0,1)$, and 
in \eqref{BN2}, $\alpha>0$.
\end{theorem}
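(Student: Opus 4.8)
\emph{Proof plan for Theorem \ref{TH1}.} The key observation is that each estimation-error probability is a sum of $P_{\rho,\sigma}^{(n)}$-weights over a level set of the map $(\blambda,j)\mapsto\Tr M_{\blambda,j}\sigma^{\otimes n}$: since $x_{\blambda,j}>R$ is equivalent to $\Tr M_{\blambda,j}\sigma^{\otimes n}<e^{-nR}$, we have $\delta_{\rho,\sigma}^{(n),+,R-D(\rho\|\sigma)}=\sum_{(\blambda,j):\,\Tr M_{\blambda,j}\sigma^{\otimes n}<e^{-nR}}\Tr M_{\blambda,j}\rho^{\otimes n}$, and $\delta_{\rho,\sigma}^{(n),-}$ is the analogous sum over the complementary kind of level set. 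The plan is to bound these sums by the standard exponential (Chernoff--Hoeffding) method and then convert the resulting moments into sandwiched R\'enyi divergences of the single-copy pair $(\rho,\sigma)$.

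First I would carry out the Chernoff step. For \eqref{BN1}, on the relevant level set one has $1\le\bigl(e^{nR}\Tr M_{\blambda,j}\sigma^{\otimes n}\bigr)^{-\alpha}$ for every $\alpha\in(0,1)$; multiplying by $\Tr M_{\blambda,j}\rho^{\otimes n}$, summing, and extending the sum to all $(\blambda,j)$ yields
\[
\delta_{\rho,\sigma}^{(n),+,R-D(\rho\|\sigma)}\ \le\ e^{-n\alpha R}\sum_{\blambda,j}\bigl(\Tr M_{\blambda,j}\rho^{\otimes n}\bigr)\bigl(\Tr M_{\blambda,j}\sigma^{\otimes n}\bigr)^{-\alpha}.
\]
For \eqref{BN2} I would first split the outcomes at an auxiliary threshold $e^{-nr}$ applied to $P_{\rho,\sigma}^{(n)}(\blambda,j)$: the outcomes with $P_{\rho,\sigma}^{(n)}(\blambda,j)\le e^{-nr}$ contribute at most $d_{n,d}\,e^{-nr}$, because there are only $d_{n,d}=\dim\bigoplus_{\blambda}{\cal U}_{\blambda}$ outcomes $(\blambda,j)$ in total; this is exactly the second term of \eqref{BN2}. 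On the remaining outcomes $P_{\rho,\sigma}^{(n)}(\blambda,j)$ may be raised to the power $1+\alpha$ at the cost of a factor $e^{n\alpha r}$, and the same Chernoff manipulation (now with $\alpha>0$ and a threshold built from $R$ and $r$) produces an exponentiated moment of the same type but with R\'enyi parameter $1+\alpha$.

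The main work — and, I expect, the main obstacle — is the conversion step. Using the Schur decompositions \eqref{BBR}--\eqref{BBR2} and the fact that the measurement basis $\{|u_{\blambda,j}\rangle\}_j$ diagonalizes $\sigma_{\blambda}$, each such moment rewrites as a sum over the blocks, e.g.\ $\sum_{\blambda,j}(\Tr M_{\blambda,j}\rho^{\otimes n})(\Tr M_{\blambda,j}\sigma^{\otimes n})^{-\alpha}=\sum_{\blambda}\Tr[\rho_{\blambda}\sigma_{\blambda}^{-\alpha}]$. One must then bound this block quantity, built from the \emph{non-commuting} pair $(\rho_{\blambda},\sigma_{\blambda})$, by $(d_{n,d})^{\alpha}$ times the $n$-copy sandwiched R\'enyi quantity $e^{-n\alpha D_{1-\alpha}(\rho\|\sigma)}$ (and analogously with $D_{1+\alpha}$ for \eqref{BN2}). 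This is precisely where the operator-inequality/pinching argument of Section X of \cite{H-24} enters: the difficulty is to pass from the measurement-adapted, block-diagonalized object to the genuine sandwiched R\'enyi divergence of $(\rho,\sigma)$ while keeping the dimensional overhead polynomial in $n$ — this is the same mechanism that produced the term $\tfrac{d(d-1)}{2}\log(n+1)$ in \eqref{NBW}, and it is controlled here by the estimate \eqref{BVT}. Once the conversion is available, additivity of the sandwiched R\'enyi divergence, $D_{1\mp\alpha}(\rho^{\otimes n}\|\sigma^{\otimes n})=nD_{1\mp\alpha}(\rho\|\sigma)$, completes \eqref{BN1}, and recombining with the $d_{n,d}e^{-nr}$ contribution completes \eqref{BN2}. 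The remaining points are routine: checking that $\alpha\in(0,1)$ for \eqref{BN1} and $\alpha>0$ for \eqref{BN2} are exactly the ranges in which the relevant sandwiched R\'enyi divergence and its pinching property hold, and choosing the free parameters $\alpha$ (and $r$) so that the exponents are positive.
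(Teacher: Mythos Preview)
Your overall architecture --- a Chernoff/Markov bound on the classical outcome distribution $P_{\rho,\sigma}^{(n)}$, the auxiliary $e^{-nr}$ threshold on $P_{\rho,\sigma}^{(n)}(\blambda,j)$ to tame one of the tails, and a pinching-based conversion of the resulting moments to sandwiched R\'enyi divergences followed by additivity --- is exactly the route the paper takes. The paper runs the Chernoff step at the operator level for the commuting pair $\bigl(\Gamma_{\sigma,n}(\rho^{\otimes n}),\sigma^{\otimes n}\bigr)$ and packages the conversion as the single inequality \eqref{NMR2}, obtained from the pinching bound $\rho^{\otimes n}\le d_{n,d}\,\Gamma_{\sigma,n}(\rho^{\otimes n})$; but that difference is cosmetic.

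The genuine gap is that you have paired the two tails with the wrong R\'enyi orders. Your Chernoff step for $\delta^{+}$ (the set $\{x_{\blambda,j}>R\}=\{\Tr M_{\blambda,j}\sigma^{\otimes n}<e^{-nR}\}$) yields the moment
\[
e^{-n\alpha R}\sum_{\blambda,j}p_{\blambda,j}\,q_{\blambda,j}^{-\alpha}
= e^{-n\alpha R}\,\Tr\rho^{\otimes n}(\sigma^{\otimes n})^{-\alpha},
\]
and this \emph{cannot} be bounded by $(d_{n,d})^{\alpha}e^{-n\alpha D_{1-\alpha}(\rho\|\sigma)}$: since $\sigma\le I$ implies $\sigma^{-\alpha}\ge I$, one has $\Tr\rho^{\otimes n}(\sigma^{\otimes n})^{-\alpha}\ge 1$ for every $n$, whereas the target tends to~$0$. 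What the pinching machinery actually delivers is an upper bound on $\Tr\Gamma_{\sigma,n}(\rho^{\otimes n})^{1-\alpha}(\sigma^{\otimes n})^{\alpha}$ by $(d_{n,d})^{\alpha}e^{-n\alpha D_{1-\alpha}}$, and that is the moment arising from the \emph{other} tail $\{x_{\blambda,j}<R\}$ after the harmless weakening $p_{\blambda,j}\le p_{\blambda,j}^{1-\alpha}$ (equivalently $\kappa_{\sigma,n}(\rho^{\otimes n})\le I$). Correspondingly, the $e^{-nr}$ splitting belongs with the tail $\{x_{\blambda,j}>R\}$, where it converts $p_{\blambda,j}$ into $p_{\blambda,j}^{1+\alpha}$ and leads, via monotonicity of the sandwiched R\'enyi divergence under $\Gamma_{\sigma,n}$, to the $D_{1+\alpha}$ bound. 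This is precisely how the paper's proof is organized --- it bounds $\sum_{x_{\blambda,j}<R}$ through $D_{1-\alpha}$ without splitting and $\sum_{x_{\blambda,j}>R}$ through $D_{1+\alpha}$ with the $r$-splitting --- so the $\pm$ labels in the stated theorem are in fact interchanged relative to what the appendix establishes.
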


Theorem \ref{TH1} guarantees that 
$\delta_{\rho,\sigma}^{(n),+,\epsilon}$ and
$\delta_{\rho,\sigma}^{(n),-,\epsilon}$ go to zero exponentially.
This property is called the large deviation.

\section{Analysis with increasing $d$}\Label{Sec3B}
Similar to \cite{AISW,WZ},
we consider the case when the dimension $d$ increases.
When $n=O(d^{4+\delta})$,
$\frac{1}{\sqrt{n}}\log d_{n,d}
\le 
\frac{1}{\sqrt{n}} \frac{(d+1)(d-1)}{2}\log (n+1)\to 0$.
In this case, due to Theorem \ref{THB},
the Cram\'{e}r-Rao type bound can be attained
because the second term of RHS in \eqref{BNB} goes to zero.

We now discuss the behavior of $n$ with respect to $d$ that allows the error to converge to zero as $d \to \infty$, rather than assuming $n = O(d^{4+\delta})$.
We denote the $d$-dimensional system by ${\cal H}_d$.
To discuss this case, we consider the sequence of states $\sigma_d$ depending on the dimension $d$.
We also choose a subset ${\cal S}_d$ of the set ${\cal S}({\cal H}_d)$ 
of states on ${\cal H}_d$.
We assume the following condition for $\sigma_d$ and ${\cal S}_d$;
\begin{align}
c_0:=\lim_{d \to \infty}\frac{1}{d^2}\max_{\rho \in {\cal S}_d}V(\rho \|\sigma_d)< \infty.\Label{CNB}
\end{align}

For example, 
when $\sigma_d$ is fixed to the completely mixed state $\rho_{\mix}$,
$V(\rho\|\rho_{\mix})$ equals 
the varentropy $\Tr \rho (\log \rho)^2-(\Tr \rho \log \rho)^2$.
Since Lemma 8 of \cite{H-02} showed that 
$\Tr \rho (\log \rho)^2 \le (\log d)^2$,
the condition \eqref{CNB} holds even with
${\cal S}_d ={\cal S}({\cal H}_d)$
so that the term $\frac{1}{n}\log d_{n,d}$
is the major term in the RHS of \eqref{BNB}.
Further,
the condition \eqref{CNB} holds with 
${\cal S}_d ={\cal S}({\cal H}_d)$
under certain conditions for $\sigma_d $.

\begin{lemma}\Label{LLO}
When the minimum eigenvalue of $\sigma_d $ is lower bounded by 
$e^{-td}$ with a fixed real number $t>0$,
the condition \eqref{CNB} holds
even when ${\cal S}_d $ equals ${\cal S}({\cal H}_d)$.
\end{lemma}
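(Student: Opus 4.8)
The plan is to produce a bound on $V(\rho\|\sigma_d)$ that holds uniformly for every $\rho\in{\cal S}({\cal H}_d)$ and grows only like $d^2$; dividing by $d^2$ and letting $d\to\infty$ then yields \eqref{CNB} at once with $c_0$ finite. Note first that the hypothesis $\lambda_{\min}(\sigma_d)\ge e^{-td}>0$ makes $\sigma_d$ invertible, so $\log\sigma_d$, and hence $D(\rho\|\sigma_d)$ and $V(\rho\|\sigma_d)$, are finite for every $\rho$.

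Next I would record that $V(\rho\|\sigma)$ is a variance: writing $A:=\log\rho-\log\sigma$ and $D:=D(\rho\|\sigma)=\Tr\rho A$ and expanding the square, one obtains $V(\rho\|\sigma)=\Tr\rho A^2-D^2\le \Tr\rho(\log\rho-\log\sigma)^2$, because $D^2\ge 0$. This step discards the term $D(\rho\|\sigma)$, which on its own can be of order $td$.

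Then, to split $\Tr\rho(\log\rho-\log\sigma_d)^2$ while respecting the non-commutativity of $\log\rho$ and $\log\sigma_d$, I would invoke the operator inequality $(X-Y)^2\le 2X^2+2Y^2$, valid for all Hermitian $X,Y$ since $2X^2+2Y^2-(X-Y)^2=(X+Y)^2\ge 0$. With $X=\log\rho$ and $Y=\log\sigma_d$, pairing against $\rho\ge 0$ gives $\Tr\rho(\log\rho-\log\sigma_d)^2\le 2\Tr\rho(\log\rho)^2+2\Tr\rho(\log\sigma_d)^2$. For the first term I would use $\Tr\rho(\log\rho)^2\le(\log d)^2$, i.e., Lemma 8 of \cite{H-02} quoted above. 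For the second, every eigenvalue of $\sigma_d$ lies in $(0,1]$ and is at least $e^{-td}$, so $\|\log\sigma_d\|_\infty=-\log\lambda_{\min}(\sigma_d)\le td$, whence $\Tr\rho(\log\sigma_d)^2\le\|\log\sigma_d\|_\infty^2\le t^2d^2$.

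Combining the three steps, $V(\rho\|\sigma_d)\le 2(\log d)^2+2t^2d^2$ for every $\rho\in{\cal S}({\cal H}_d)$, hence $\frac1{d^2}\max_{\rho\in{\cal S}({\cal H}_d)}V(\rho\|\sigma_d)\le\frac{2(\log d)^2}{d^2}+2t^2\to 2t^2$, so $c_0\le 2t^2<\infty$. I expect no real obstacle here; the one point needing care is not to treat $\log\rho-\log\sigma_d$ as though $\rho$ commuted with it, which is exactly why the variance identity and the inequality $(X-Y)^2\le 2X^2+2Y^2$ are used in place of a naive scalar Chebyshev estimate. A sharper constant could be obtained by centering $\log\sigma_d$ at its mean $\Tr\rho\log\sigma_d$ before estimating, but only finiteness of $c_0$ is required for \eqref{CNB}.
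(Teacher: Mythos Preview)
Your proof is correct and follows essentially the same approach as the paper: split the varentropy into a $\log\rho$ contribution bounded by $(\log d)^2$ via \cite[Lemma~8]{H-02} and a $\log\sigma_d$ contribution bounded by $(td)^2$ via the eigenvalue hypothesis. The only difference is the splitting device---the paper centers both operators and applies the triangle inequality for the norm $\|\cdot\|_\rho$ to get $V\le(\log d+td)^2$, whereas you drop $D^2$ and use the operator inequality $(X-Y)^2\le 2X^2+2Y^2$ to get $V\le 2(\log d)^2+2(td)^2$; either bound suffices.
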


\begin{proof}
Considering the definition of $V(\rho\|\sigma_d)$ and the norm $\|~\|_\rho$,
we have
\begin{align}
&\sqrt{V(\rho\|\sigma_d)}\notag\\
=&\| \log \rho-(\Tr \rho \log \rho)
- \log \sigma_d +(\Tr \rho \log \sigma_d)\|_{\rho}  \notag\\
\le & 
\| \log \rho-(\Tr \rho \log \rho)\|_{\rho}  \notag\\
&+\| \log \sigma_d -(\Tr \rho \log \sigma_d)\|_{\rho}  .\Label{CX1}
\end{align}
Using the property of the norm $\|~\|_\rho$,
we have
\begin{align}
& \| \log \rho-(\Tr \rho \log \rho)\|_{\rho} ^2 \notag\\
=& \Tr \rho (\log \rho-(\Tr \rho \log \rho))^2 \notag\\
=& (\Tr \rho (\log \rho)^2)-(\Tr \rho \log \rho)^2 \notag\\
\le &  (\Tr \rho (\log \rho)^2)
\stackrel{(a)}{\le}  (\log d)^2, \Label{CX2}
\end{align}
where $(a)$ follows from \cite[Lemma 8]{H-02}.
Similarly, we have
\begin{align}
& \| \log \sigma_d -(\Tr \rho \log \sigma_d)\|_{\rho}^2 \notag\\
=& \Tr \rho (\log \sigma_d-(\Tr \rho \log \sigma_d))^2 \notag\\
=& (\Tr \rho (\log \sigma_d)^2)-(\Tr \rho \log \sigma_d)^2 \notag\\
\le &  (\Tr \rho (\log \sigma_d)^2)
\le (t d)^2\Label{CX3}.
\end{align}
Combining \eqref{CX1}, \eqref{CX2}, and \eqref{CX3}, we obtain the condition \eqref{CNB}.
\end{proof}

The sample complexity is characterized as the following theorem.
\begin{theorem}\label{TH4}
When the condition \eqref{CNB} holds and $n=c d^2$,
we have
\begin{align}
&\lim_{d\to \infty}\max_{\rho \in {\cal S}_d}
\delta_{\rho,\sigma}^{(n),+, \epsilon}
+\delta_{\rho,\sigma}^{(n),-, \epsilon}\notag\\
 \le& \frac{1}{\epsilon^2}
 \Big( \frac{\sqrt{c_0}}{\sqrt{c}}+\min_{0<s<1} \frac{c^{s-1}  }{s (1-s)}\Big)^2 
 \notag\\
 \le &\frac{(\sqrt{c_0}+4)^2  }{c \epsilon^2}.
\end{align}
\end{theorem}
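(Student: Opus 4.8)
The plan is to bound the two tail probabilities by the mean‑square error through Markov's inequality, and then to control the mean‑square error by the same decomposition that underlies Theorem~\ref{THB}, but carried out so that the dimensional remainder does not blow up in the regime $n=cd^2$, $d\to\infty$.

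For $\epsilon>0$ the events $\{x_{\blambda,j}-D(\rho\|\sigma)>\epsilon\}$ and $\{x_{\blambda,j}-D(\rho\|\sigma)<-\epsilon\}$ are disjoint and their union is $\{|x_{\blambda,j}-D(\rho\|\sigma)|>\epsilon\}$, so Markov's inequality applied to $(x_{\blambda,j}-D(\rho\|\sigma))^2$ gives $\delta_{\rho,\sigma}^{(n),+,\epsilon}+\delta_{\rho,\sigma}^{(n),-,\epsilon}\le \MSE_n(\rho\|\sigma)/\epsilon^2$. Hence it is enough to show that, for $n=cd^2$, $\limsup_{d\to\infty}\max_{\rho\in{\cal S}_d}\sqrt{\MSE_n(\rho\|\sigma_d)}\le\tfrac{\sqrt{c_0}}{\sqrt c}+\min_{0<s<1}\tfrac{c^{s-1}}{s(1-s)}$; squaring and dividing by $\epsilon^2$ gives the first stated bound, and since $\tfrac{c^{s-1}}{s(1-s)}\big|_{s=1/2}=\tfrac{4}{\sqrt c}$ the coarser bound $\tfrac{(\sqrt{c_0}+4)^2}{c\epsilon^2}$ follows.

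I would next split $x_{\blambda,j}-D(\rho\|\sigma)$ into its mean and its centred fluctuation and use Minkowski's inequality in $L^2(P_{\rho,\sigma}^{(n)})$, so that $\sqrt{\MSE_n(\rho\|\sigma)}\le\sqrt{\mathrm{Var}(x_{\blambda,j})}+\big|\mathrm{E}[x_{\blambda,j}]-D(\rho\|\sigma)\big|$. The fluctuation is governed to leading order by the relative varentropy --- this is exactly what the local asymptotic normality of Theorem~\ref{TH3} expresses, and it is already isolated in the computation proving Theorem~\ref{THB} --- so $\sqrt{\mathrm{Var}(x_{\blambda,j})}\le\sqrt{V(\rho\|\sigma)/n}$ up to corrections that vanish as $d\to\infty$ when $n=cd^2$; together with \eqref{CNB} this gives $\limsup_{d\to\infty}\max_{\rho\in{\cal S}_d}\sqrt{\mathrm{Var}(x_{\blambda,j})}\le\sqrt{c_0}/\sqrt c$. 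Using the Schur decompositions \eqref{BBR}--\eqref{BBR2} and the cancellation of the $\log d_{\blambda}'$ terms in $\log\rho^{\otimes n}-\log\sigma^{\otimes n}$, the bias has the closed form $\mathrm{E}[x_{\blambda,j}]-D(\rho\|\sigma)=-\tfrac1n\sum_{\blambda\in Y_d^n}\Tr[\rho_{\blambda}\log\rho_{\blambda}]\ge 0$, where $\rho_{\blambda}$ is the (subnormalised) block of $\rho^{\otimes n}$ on the $\SU(d)$‑irreducible space ${\cal U}_{\blambda}$.

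The crux is to bound this bias by $\min_{0<s<1}c^{s-1}/(s(1-s))$ rather than by the crude $\tfrac1n\log d_{n,d}=\Theta(d^2\log n/n)$ of Theorem~\ref{THB}, which diverges once $n=cd^2$. I would use the elementary inequality $-t\log t\le\tfrac1{s(1-s)}(t^{s}-t)$, valid for $0\le t\le 1$ and $s\in(0,1)$ (a consequence of $\mathrm{e}^{y}\ge 1+y$), applied to the eigenvalues of each block $\rho_{\blambda}$, which gives $-\sum_{\blambda}\Tr[\rho_{\blambda}\log\rho_{\blambda}]\le\tfrac1{s(1-s)}\big(\sum_{\blambda}\Tr[\rho_{\blambda}^{s}]-1\big)$, and then estimate $\sum_{\blambda}\Tr[\rho_{\blambda}^{s}]$ by a quantity of the form $n^{s}d^{2(1-s)}$; dividing by $n=cd^2$ and optimising over $s$ produces $c^{s-1}/(s(1-s))$. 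The main obstacle is precisely this last estimate: the naive bound $\sum_{\blambda}\Tr[\rho_{\blambda}^{s}]\le d_{n,d}^{1-s}$, obtained from \eqref{NM1} and the power‑mean inequality, is too lossy and reinstates the harmful $\log n$ factor. One has to instead keep the representation‑theoretic weights in the exact Schur‑duality identity $\sum_{\blambda}(d_{\blambda}')^{1-s}\Tr[\rho_{\blambda}^{s}]=(\Tr\rho^{s})^{n}$ and exploit that the Young‑diagram distribution of $\rho^{\otimes n}$ concentrates on diagrams close to $n\cdot\mathrm{spec}(\rho)$, so that the probability mass carried by the diagrams with atypically large $d_{\blambda}$ is super‑polynomially small. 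Combining this refined bias estimate with the varentropy term through Minkowski's inequality yields the first stated inequality, and specialising to $s=\tfrac12$ yields the second.
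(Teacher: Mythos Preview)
Your reduction to Chebyshev and the identification of the bias $\mathrm{E}[x_{\blambda,j}]-D(\rho\|\sigma)=\tfrac1n S(\rho_n)=-\tfrac1n\sum_\blambda\Tr[\rho_\blambda\log\rho_\blambda]$ are both correct; the problems lie in the two estimates that follow.

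\textbf{The variance claim.} You assert that $\sqrt{\mathrm{Var}(x_{\blambda,j})}\le\sqrt{V(\rho\|\sigma)/n}$ up to corrections that vanish when $n=cd^2$, citing Theorem~\ref{TH3} and the proof of Theorem~\ref{THB}. Neither reference supports this. Theorem~\ref{TH3} is an asymptotic statement for \emph{fixed} $d$, and the proof of Theorem~\ref{THB} controls $\sqrt{\MSE}$, not $\sqrt{\mathrm{Var}}$, via the operator-level Minkowski split $\sqrt{\MSE}\le\|Y_1\|_{\rho^{\otimes n}}+\|Y_2\|_{\rho^{\otimes n}}$ with $Y_2=\tfrac1n\sum_\blambda\log(\rho_\blambda\otimes I)$. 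Your bias--variance split instead yields $\sqrt{\mathrm{Var}}\le\|Y_1\|+\|Y_2-\mathrm{E}[Y_2]\|$ and $\mathrm{bias}=|\mathrm{E}[Y_2]|$, hence $\sqrt{\MSE}\le\|Y_1\|+\|Y_2-\mathrm{E}[Y_2]\|+|\mathrm{E}[Y_2]|$. Each of the two remainder pieces is bounded by $\|Y_2\|\le\tfrac1n\log d_{n,d}$, which in the regime $n=cd^2$ is of order $c^{s-1}/(s(1-s))$ --- bounded but \emph{not} vanishing. Your route therefore gives $(\sqrt{c_0}+8)^2/(c\epsilon^2)$ rather than the stated $(\sqrt{c_0}+4)^2/(c\epsilon^2)$, unless you prove an additional concentration result for the spectrum of $\rho_n$, uniformly over $\rho\in{\cal S}_d$; you have not done so.

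\textbf{The bias bound.} The inequality $-t\log t\le(t^s-t)/(s(1-s))$ reduces the task to $\sum_\blambda\Tr\rho_\blambda^{\,s}\le n^s d^{2(1-s)}$. You correctly note that the power-mean bound $\le d_{n,d}^{\,1-s}$ is useless (it is exponential in $d^2$). But the proposed fix --- the identity $\sum_\blambda (d'_\blambda)^{1-s}\Tr\rho_\blambda^{\,s}=(\Tr\rho^s)^n$ together with Young-diagram concentration --- does not close the gap: the right-hand side is itself exponential in $n$, and extracting a polynomial bound on $\sum_\blambda\Tr\rho_\blambda^{\,s}$ would require quantitative control of the weights $(d'_\blambda)^{1-s}$ across the whole support of the Schur--Weyl measure, uniformly in $\rho$. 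This is not supplied.

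The paper avoids both issues. It invokes Theorem~\ref{THB} as is, so that only the single remainder $\tfrac1n\log d_{n,d}$ appears, and then bounds this quantity \emph{state-independently} through Weyl's dimension formula: writing $\log\dim{\cal U}_\blambda=\sum_{i<j}\log\bigl(1+(\lambda_j-\lambda_i)/(j-i)\bigr)$, grouping by $l=j-i$, applying Jensen's inequality and the elementary bound $\log(1+x)\le x^s/s$ for $s\in(0,1)$ gives $\log d_{n,d}\le (d-1)\log(n+1)+\sum_{l=1}^d(d-l)^{1-s}n^s/(sl^s)$, which after $n=cd^2$ and passage to a Riemann sum yields $\limsup\tfrac1n\log d_{n,d}\le c^{s-1}/(s(1-s))$. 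This combinatorial estimate is the key step your proposal lacks.
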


When we choose $c$ to be $\frac{(\sqrt{c_0}+4)^2  }{\epsilon' \epsilon^2}$ for any $\epsilon,\epsilon'>0$, 
i.e., $n= \frac{(\sqrt{c_0}+4)^2  }{\epsilon' \epsilon^2} d^2$,
we have
\begin{align}
\lim_{d\to \infty}\max_{\rho \in {\cal S}_d}
\delta_{\rho,\sigma}^{(n),+, \epsilon}
+\delta_{\rho,\sigma}^{(n),-, \epsilon}
 \le \epsilon'.\Label{NMC}
\end{align}
Hence, when the condition \eqref{NMC} is imposed for the estimation error, 
we obtain an upper bound $O(d^2)$
of the sample complexity for the estimation of 
$D(\rho\|\sigma_d)$ when $\rho$ belongs to ${\cal S}_d$. 

Now, we compare the sample complexity of our method with the sample complexity of the simple application of full tomography \cite{HHJ,FO}.
In particular, we employ the error analysis on the trace norm \cite{HHJ} rather than other error criteria because the trace norm is most useful to evaluate the error of the relative entropy.
Their method achieves the estimation error
\begin{align}
\|\hat{\rho}-\rho \|_1 \le \epsilon
\end{align}
with high probability by sample complexity $O(d^2/\epsilon^2)$,
where $\hat{\rho}$ is the estimated density and $\rho$ is the true density.
When the known density $\sigma_d$ is the completely mixed state,
our problem is reduced to the estimation of von Neumann entropy
$S(\rho)$. Fannes inequality implies the relation
\begin{align}
| S(\hat{\rho})-S(\rho)|\le \|\hat{\rho}-\rho \|_1 \log d.
\end{align} 
That is, to achieve the condition
\begin{align}
| S(\hat{\rho})-S(\rho)|\le \epsilon',
\end{align} 
we need the sample complexity $O(d^2 (\log d)^2 /{\epsilon'}^2)$.
Although our method and the method by \cite{AISW} improves 
the sample complexity over the above method,
the degree of the improvement is incremental
in this case.

When the known state $\sigma_d$ is 
a general state to satisfy the condition of Lemma \ref{LLO}, i.e., 
the minimum eigenvalue of $\sigma_d $ is lower bounded by 
$e^{-td}$ with a fixed real number $t>0$,
the part $\log \sigma_d$ has the operator norm $t d$ at most.
Hence, we have
\begin{align}
& | D(\hat{\rho}\|\sigma_d)-D(\rho\|\sigma_d)| \notag\\
\le & | S(\hat{\rho})-S(\rho)|
+ | \Tr (\hat{\rho} -\rho)\log \sigma_d| \notag\\
\le & \|\hat{\rho}-\rho \|_1 \log d+ \|\hat{\rho}-\rho \|_1 \|\log \sigma_d\|
\notag\\
\le &\|\hat{\rho}-\rho \|_1(\log d+ t d ).
\end{align} 
That is, to achieve the condition
\begin{align}
| D(\hat{\rho}\|\sigma_d)-D(\rho\|\sigma_d)| \le \epsilon',
\end{align} 
we need the sample complexity $O(d^2 (\log d+ t d )^2 /{\epsilon'}^2)
=O(d^4 /{\epsilon'}^2)$.
Our method greatly improves the sample complexity over the 
the simple application of full tomography \cite{HHJ}.
That is, our extension to the estimation of the relative entropy
realizes significant improvement in the sample complexity over the above method.

In fact, the paper \cite{FO} also considers the sample
complexity when Bures $\chi^2$-divergence
and Bures distance are adopted.
However, the evaluation of the error under these criteria does not fit well with the above method.
Moreover,
it is challenging to find 
a good upper bound of the estimation errors 
of $S(\hat{\rho})-S(\rho)$
and $D(\hat{\rho}\|\sigma)-D(\rho\|\sigma)$
by using the estimation error based on these criteria.

\begin{proof}[Proof of Theorem \ref{TH4}]
We choose $s \in (0,1)$. Using
Weyl's dimension formula \cite[Theorem 7.1.7]{GW}, we have
\begin{align}
&\log \dim {\cal U}_{\blambda} 
=\log \prod_{1 \le i < j \le d}\frac{j-i+\lambda_j-\lambda_i}{j-i}\notag\\
=&\!\!\sum_{1 \le i < j \le d}\!\!\log \Big(1+\frac{\lambda_j-\lambda_i}{j-i}\Big)
\notag\\
=&\sum_{l=1}^d \sum_{i=1}^{d-l}
\log \Big(1+\frac{\lambda_{l+i}-\lambda_i}{l}\Big) \notag\\
=&\sum_{l=1}^d (d-l)
\sum_{i=1}^{d-l} \frac{1}{d-l}
\log \Big(1+\frac{\lambda_{l+i}-\lambda_i}{l} \Big) \notag\\
\stackrel{(a)}{\le} &\sum_{l=1}^d (d-l)
\log \Big(1+
\sum_{i=1}^{d-l} \frac{1}{d-l}
\frac{\lambda_{l+i}-\lambda_i}{l}\Big) \notag\\
\le &\sum_{l=1}^d (d-l)
\log \Big(1+
\frac{n}{(d-l)l}\Big) \notag\\
\stackrel{(b)}{\le} &
 \sum_{l=1}^d \frac{(d-l)}{s}
\Big(\frac{n}{(d-l)l}\Big)^s 
=  \sum_{l=1}^d \frac{(d-l)^{1-s} n^s }{s l^s},\Label{ZOI}
\end{align}
where $(a)$ follows from Jensen inequality for 
the concave function $x \mapsto \log (1+x)$ and 
the uniform distribution over $\{1, \ldots, d-l\}$.
$(b)$ can be shown as follows.
\begin{align}
& \log \Big(1+
\frac{n}{(d-l)l}\Big)
=\frac{1}{s}
\log \Big(1+
\frac{n}{(d-l)l}\Big)^s \notag\\
\stackrel{(c)}{\le} & 
\frac{1}{s} \log \Big(1+
\Big(\frac{n}{(d-l)l}\Big)^s\Big) 
\stackrel{(d)}{\le} 
\frac{1}{s}
\Big(\frac{n}{(d-l)l}\Big)^s, \Label{BNC}
\end{align}
where
$(c)$ follows from the inequality $(x+y)^s \le x^s+y^s$ for $x,y\ge 0$, and
$(d)$ follows from the inequality $\log (1+x)\le x$.
Thus, we have
\begin{align}
& \log d_{n,d} 
\le\log |Y_d^n| \max_{\blambda}\dim {\cal U}_{\blambda} \notag\\
\le& \log |Y_d^n| +\sum_{l=1}^d \frac{(d-l)^{1-s} n^s }{s l^s} \notag\\
\le& (d-1) \log (n+1) +\sum_{l=1}^d \frac{(d-l)^{1-s} n^s }{s l^s}.
\end{align}

When $n=c d^2$, we have
\begin{align}
& \frac{1}{d^2}\log \dim {\cal U}_{\blambda}  
\le 
\frac{1}{d^2}\sum_{l=1}^d \frac{(d-l)^{1-s} c^sd^{2s} }{s l^s} \notag\\
\to &
\int_0^1 \frac{(1-x)^{1-s} c^s  }{s x^s} dx\notag\\
\le  &
\int_0^1 \frac{c^s  }{s x^s} dx
=  
\Big[\frac{c^s  }{s (1-s)} x^{1-s} \Big]_0^1=\frac{c^s  }{s (1-s)}.
\Label{NM3}
\end{align}
Since the term 
$\frac{1}{n}(d-1) \log (n+1)$ goes to zero, 
the term $\frac{1}{n}\log d_{n,d}$ is evaluated as
\begin{align}
\lim_{d\to \infty}\frac{1}{n}\log d_{n,d} \le \min_{0<s<1}\frac{c^{s-1}  }{s (1-s)}\Label{CV1}.
\end{align}
The definition of $c_0$ implies
\begin{align}
\lim_{d\to \infty}
\frac{1}{\sqrt{n}}\max_{\rho \in {\cal S}_d}\sqrt{V(\rho\|\sigma_d)}
=\frac{\sqrt{c_0}}{\sqrt{c}}.\Label{CV2}
\end{align}
Thus, we have
\begin{align}
& \lim_{d\to \infty}\max_{\rho \in {\cal S}_d}
\MSE_n(\rho\|\sigma_d) \notag \\
\stackrel{(e)}{\le} & 
 \Big( \frac{\sqrt{c_0}}{\sqrt{c}}+\min_{0<s<1} \frac{c^{s-1}  }{s (1-s)}\Big)^2 
\stackrel{(f)}{\le} \frac{(\sqrt{c_0}+4)^2  }{c},
\end{align}
where $(e)$ follows from the combination of 
\eqref{CV1}, \eqref{CV2}, and Theorem \ref{THB},
$(f)$ follows by choosing $s=1/2$.
Finally, applying Chebyshev's inequality,
we obtain the desired inequalities.
\end{proof}

The key point of this proof is the derivation \eqref{ZOI}.
In particular, the derivation \eqref{BNC} is the heart in this part.
The idea of this derivation comes from the proof of (12) given in Appendix IV
of the paper \cite{Ha11-3}, which gives the security analysis in information-theoretic security.

\section{Analysis on time complexity}\Label{Sec3C}
To implement our estimator, 
we need to apply suitable unitary gates before the measurement based on the computation basis.
For simplicity, we consider the case when 
$\sigma$ is a diagonal matrix.
For a general density $\sigma$, we need to apply a unitary 
diagonalizing  the density $\sigma$ to each system ${\cal H}$
before the following procedure.

The references \cite{BCH,Krovi,Nguyen,GBO} proposed  quantum algorithms for the Schur transform $U_{Schur}$.
We evaluate its precision as follows.
Given a required error $\epsilon>0$ and a  target unitary $V$,
we say that a unitary $U$ is an $\epsilon$-approximate of $V$ 
when any pure state $|\psi\rangle$ satisfies
\begin{align}
\| (U-V)|\psi\rangle \| \le \epsilon.\Label{BZW}
\end{align}
When the constructed unitary is 
an $\epsilon$-approximate of the Schur transform $U_{Schur}$, 
the time complexity of the construction \cite{Nguyen} is 
$O(n d^4 \polylog (n,d,1/\epsilon))$.
Since these constructions realize the approximation \eqref{BZW},
the estimation error probability increases at most only $\epsilon^2$.
We denote the distribution of $(\lambda,j)$ 
when we employ an $\epsilon$-approximate of the Schur transform $U_{Schur}$
by 
$P_{\rho,\sigma}^{(n),\epsilon}$.

In the classical part of our estimator, we need to calculate
$x_{\blambda,j}=-\frac{1}{n}\log \Tr M_{\blambda,j} \sigma^{\otimes n}$.
When $ \sigma$ has diagonal elements $s_1, \ldots,s_d $,
the eigenvalue of $\sigma^{\otimes n}$ on the projection
$M_{\blambda,j}$ has the form 
$\prod_{j=1}^{d} s_j^{n_j}$, where $\sum_{j=1}^d n_j=n$.
Then, we  have
\begin{align}
\Tr M_{\blambda,j} \sigma^{\otimes n}= 
\dim {\cal V}_{\blambda}
\prod_{j=1}^{d} s_j^{n_j}.
\end{align}

For a Young index  $\blambda$, the dimension ${\cal V}_{\blambda}$
is evaluated as 
\begin{align}
\dim {\cal V}_{\blambda}& =
\frac{n!}{\blambda!}
e(\blambda)\\
e(\blambda)&:=\prod_{j>i}\frac{\lambda_j-\lambda_i-i+j}{\lambda_j+j-i}<1,
\end{align}
where $ \blambda!:=\lambda_1!\lambda_2!\cdots \lambda_d!$.
For its detail, see \cite[Eq. (2.72)]{Group1}.
Using \cite[Theorem 2.5]{Springer}, we have 
\begin{align}
e^{n H (\frac{\blambda}{n})} (n+1)^{-(d-1)}
\le \frac{n!}{\blambda!}\le 
e^{n H (\frac{\blambda}{n})}.
\end{align}
The value $e(\blambda)$ is evaluated as
\begin{align}
&-\log e(\blambda)
=\sum_{j>i}\log \frac{\lambda_j+j-i}{\lambda_j-\lambda_i-i+j} \notag\\
=&\sum_{j>i}\log (1 +\frac{\lambda_i}{\lambda_j-\lambda_i+j-i}) \notag\\
\le & \sum_{j>i}\log (1 +\frac{\lambda_i}{j-i})\notag\\
=&\sum_{l=1}^d (d-l)
\sum_{i=1}^{d-l} \frac{1}{d-l}
\log \Big(1+\frac{\lambda_i}{l} \Big) \notag\\
\stackrel{(a)}{\le} &\sum_{l=1}^d (d-l)
\log \Big(1+
\sum_{i=1}^{d-l} \frac{1}{d-l}
\frac{\lambda_i}{l}\Big) \notag\\
\le &\sum_{l=1}^d (d-l)
\log \Big(1+
\frac{n}{(d-l)l}\Big) \notag\\
\stackrel{(b)}{\le} &
\sum_{l=1}^d \frac{(d-l)^{1-s} n^s }{s l^s},
\end{align}
where $(a)$ follows from Jensen inequality for 
the concave function $x \mapsto \log (1+x)$, and 
$(b)$ follows from \eqref{BNC} in the same way as \eqref{ZOI}.

Instead of $x_{\blambda,j}$, we use its approximation
$x_{\blambda,j}^*:=
H (\frac{\blambda}{n})-\sum_{j=1}^d \frac{n_j}{n} \log s_j  $,
whose time complexity is $O(n)$.
Then, we have
\begin{align}
0 \le & x_{\blambda,j}^*-x_{\blambda,j}
\le 
 \frac{1}{n}(d\log (n+1)-\log e(\blambda) )\notag \\
\le &
 \frac{1}{n} \Big(d\log (n+1)-
 \sum_{l=1}^d \frac{(d-l)^{1-s} n^s }{s l^s}\Big) 
\Label{NM2}.
\end{align}

When $n=cd^2 $,
$\lim_{n\to \infty}
 \frac{1}{n}(d\log (n+1)-
 \sum_{l=1}^d \frac{(d-l)^{1-s} n^s }{s l^s}) 
 \le \frac{4}{\sqrt{c}} $.

Given an arbitrary number $\epsilon>0$, we choose $c=\frac{16}{\epsilon^2}$.
When $n=cd^2$,
the limit of the upper bound in \eqref{NM2} can be bounded by $\epsilon$.
Then, we define 
\begin{align}
\delta_{\rho,\sigma}^{(n),+,\epsilon,\epsilon_1}
&:=P_{\rho,\sigma}^{(n),\epsilon_1}
\{(\blambda,j): x_{\blambda,j}^*-D(\rho\|\sigma)>\epsilon
\} \\
\delta_{\rho,\sigma}^{(n),-,\epsilon,\epsilon_1}
&:=P_{\rho,\sigma}^{(n),\epsilon_1}
\{(\blambda,j): x_{\blambda,j}^*\!-\!D(\rho\|\sigma)
\!<\!-\epsilon
\} .
\end{align}
Thus, when $n=c d^2$, using \eqref{NMC}, 
we have
\begin{align}
&\lim_{d\to \infty}\max_{\rho \in {\cal S}_d}
\delta_{\rho,\sigma}^{(n),+, \epsilon,\epsilon_1}
+\delta_{\rho,\sigma}^{(n),-, \epsilon,\epsilon_1}\notag\\
 \le & \frac{(\sqrt{c_0}+4)^2  }{c(\epsilon-\frac{4}{\sqrt{c}} )^2}
 +\epsilon_1^2
  \Label{3NMC}.
\end{align}

Hence, when we choose $\epsilon_1=\sqrt{\epsilon'/2}$
and $c=\min(\frac{64}{\epsilon^2},\frac{8(4+\sqrt{c_0})^2}{\epsilon^2\epsilon'})$ for any $\epsilon,\epsilon'>0$,
we have
\begin{align}
\lim_{d\to \infty}\max_{\rho \in {\cal S}_d}
\delta_{\rho,\sigma}^{(n),+, \epsilon,\epsilon_1}
+\delta_{\rho,\sigma}^{(n),-, \epsilon,\epsilon_1}
 \le \epsilon'.\Label{NM9}
\end{align}
Overall,
when $n=O(d^2)$, we use 
the construction of Schur transform $U_{Schur}$ by 
\cite{Nguyen} with the time complexity $O(d^6 \polylog (d,1/\epsilon))$
and the estimate $x_{\blambda,j}$ with 
the time complexity $O(n)=O(d^2)$.
That is, the total construction has a total time complexity
$O(d^6 \polylog (d))$ to realize the condition \eqref{NM9} with a fixed value $\epsilon>0$.

\begin{remark}
Initially, the paper \cite{BCH} gives a polynomial algorithm for Schur transform $U_{Schur}$. However, it did not give a precise evaluation of its time complexity. Although the paper \cite{Nguyen} writes that the protocol by the paper \cite{BCH} has time complexity $O\left(n d^3\polylog (n, d, 1 / \epsilon)\right)$ in its Table 1, the paper \cite{Nguyen} does not give the derivation of this evaluation because its 
main focus is 
the implementation of the mixed Schur transform 
$U_{MixedSchur}$, which realizes
a basis change to block-diagonalize $U^{\otimes n}\otimes \overline{U}^{\otimes m}$ for an arbitrary unitary $U$.
In fact, the paper \cite{Nguyen} proved that
its time complexity is 
$O\left((n+m) d^4 \polylog (n, m, d, 1 / \epsilon)\right)$, and
the recent paper \cite{GBO} also proposed the same algorithm as the algorithm by \cite{Nguyen}. 
In this paper, to make a safer statement, we employ the statement that the time complexity of the unitary $U_{MixedSchur}$ is $O\left((n+m) d^4 \polylog
(n, m, d, 1 / \epsilon)\right)$. Its special case with $m=0$ corresponds to $U_{Schur}$ and covers our case with the time complexity $O\left(n d^4 \polylog(n, d, 1 / \epsilon)\right)$.
\end{remark}

\begin{remark}
The paper \cite{WZ} also mentions that
there exists an algorithm for Schur transform $U_{Schur}$
whose time complexity is $O(d^6 \polylog (d,1/\epsilon))$
when $n=O(d^2)$ \cite{QW}.
First, they employ
the algorithm for weak Schur sampling given in \cite[Section 4.2.2]{MW}. 
Although this method employs the quantum Fourier transform 
the permutation group $\frS_n$,
they proposed to use the algorithm by \cite{KS} for this part.
The obtained algorithm has time complexity 
$O(n^3 \polylog (n,d,1/\epsilon))$.
When $n=d^2$, it equals $O(d^6 \polylog (d,1/\epsilon))$.
This implementation of weak Schur sampling was mentioned earlier in \cite[Section 1.1.2]{WZ2}.
\end{remark}

\begin{remark}
Recently, the paper \cite{SB}
studied the estimation of relative entropy with two unknown 
densities $\rho$ and $\sigma$
when an estimator of the state is given
and the estimator converges to the true density.
The analysis \cite{SB} is different from ours in the following points.
(i) We assume that $\sigma$ is known,
but the paper \cite{SB} assumes that it is unknown.
(ii) We discuss the optimality of our estimator
in various settings. 
Under the case with fixed $d$, we optimize
the MSE.
When $d$ increases, we discuss the sample complexity.
But,
when an estimator of the state is given,
the paper \cite{SB} 
discusses the behavior of the estimator of relative entropy 
by using the behavior of the estimator.
In addition, 
the paper \cite{SB} mentioned that the limiting distribution of 
the normalized estimator is the normal distribution.
When $d$ is fixed, the discussion explains how this problem is reduced to a special case of the parameter estimation with nuisance parameters.
The paper \cite{YCH} also showed 
that the limiting distribution of 
the normalized estimator is the normal distribution
in the parameter estimation with nuisance parameters.
\end{remark}

\section{Discussion}
We have studied the estimation of quantum relative entropy 
$D(\rho\|\sigma)$
when the second argument $\sigma$ is known.
Our estimator has a good performance in the following two senses.
First, when the dimension $d$ is fixed, our estimator attains Cram\'{e}r-Rao type bound asymptotically.
Second, when the dimension $d$ increases,
our estimator has sample complexity $O(d^2)$ under natural assumptions,
which greatly improves the sample complexity by the simple application of full tomography \cite{HHJ}.
Third, when $d$ is large,
although the calculation of our estimator is difficult,
we have an alternative estimator whose  
time complexity is $O(d^6 \polylog (d))$.
The difference between this alternative estimator and our original estimator approaches zero
asymptotically.
In the above way, our estimator works well unifiedly, 
which is a unique feature and did not appear in existing studies \cite{AISW,WZ,SYH}.

Since our estimator assumes the knowledge of the form of the second argument $\sigma$, it is needed to develop an estimator that works without this knowledge.
 In this case, we need to prepare $n$ copies of the unknown state $\sigma$ as well.
To attain the Cram\'{e}r-Rao type bound with fixed $d$, 
it is sufficient to estimate the form of the density matrix $\sigma$.
However, such a method has a large time complexity when $d$ increases.
It is an interesting open problem to derive an efficient estimator of the quantum relative entropy in the above setting. 

\section*{Acknowledgement}
The author is supported in part by the National Natural Science Foundation of China (Grant No.
62171212)
and 
the General R\&D Projects of 
1+1+1 CUHK-CUHK(SZ)-GDST Joint Collaboration Fund 
(Grant No. GRDP2025-022).
The author is grateful for Professor Kun Fang
to a helpful discussion.
Also, the author would like to thank Mr Dmitry Grinko and Dr. Qisheng Wang
for helpful discussions on the time complexity of 
Schur transform.
In addition, the author would like to thank 
Prof. Jun Suzuki and Dr. Francesco Albarelli
for helpful discussions on the Cram\'{e}r-Rao type bound.
Also, the author would like to thank 
Prof. Macro Tomamichel and Dr. Hu Yanglin 
for helpful discussions for the relation with full tomography
and informing the references \cite{HHJ,FO}.

\appendix
\section{Evaluation of mean square error: Proof of Theorem \ref{THB}}\Label{Sec4}
We define the state 
$\rho_{n}:= \sum_{\blambda} \rho_{\blambda}$ on the system $\oplus_{\blambda \in Y_d^n}
{\cal U}_{\blambda}$.
We choose an orthogonal basis $\{|v_{\blambda,j'}\}_{j'}$ of ${\cal V}_{\blambda}$.

We use the methods 
\cite{H-02,TH,HT}.
Then, we have the following calculation.
\begin{align}
&x_{\blambda,j}-D(\rho\|\sigma)
=-\frac{1}{n}\log \Tr M_{\blambda,j} \sigma^{\otimes n}
-D(\rho\|\sigma)\notag\\
=&
-\frac{1}{n}
(
\log \langle u_{\blambda,j},v_{\blambda,j'}| \sigma^{\otimes n}|u_{\blambda,j},v_{\blambda,j'}\rangle
+\log d_{\blambda}')\notag\\
&-D(\rho\|\sigma)\notag 
\end{align}
\begin{align}
=&
-\frac{1}{n}
(
\log \langle u_{\blambda,j},v_{\blambda,j'}| \sigma^{\otimes n}|u_{\blambda,j},v_{\blambda,j'}\rangle
\notag\\
&-
\log \langle u_{\blambda,j},v_{\blambda,j'}| \rho^{\otimes n}|u_{\blambda,j},v_{\blambda,j'}\rangle)
-D(\rho\|\sigma) \notag\\
&
-\frac{1}{n}
(\log d_{\blambda}'
+\log \langle u_{\blambda,j},v_{\blambda,j'}| \rho^{\otimes n}|u_{\blambda,j},v_{\blambda,j'}\rangle) \notag\\
=&
\frac{1}{n}
(
-\log \langle u_{\blambda,j},v_{\blambda,j'}| \sigma^{\otimes n}|u_{\blambda,j},v_{\blambda,j'}\rangle\notag\\
&+
\log \langle u_{\blambda,j},v_{\blambda,j'}| \rho^{\otimes n}
|u_{\blambda,j},v_{\blambda,j'}\rangle)
-D(\rho\|\sigma) \notag\\
&
-\frac{1}{n}
\log \langle u_{\blambda,j}| \rho_{\blambda}|u_{\blambda,j}\rangle.
\end{align}
Therefore, we have
\begin{align}
&\sum_{(\blambda,j)}
P_{\rho,\sigma}^{(n)}(\blambda,j)
(x_{\blambda,j}-D(\rho\|\sigma))^2 \notag\\
=&\sum_{(\blambda,j)}
P_{\rho,\sigma}^{(n)}(\blambda,j)
\Big(\frac{1}{n}
(
-\log \langle u_{\blambda,j}| \sigma_{\blambda}|u_{\blambda,j}\rangle\notag\\
&-D(\rho\|\sigma)
\Big)^2 \notag\\
=&\sum_{(\blambda,j,j')}
\langle u_{\blambda,j}, v_{\blambda,j'}| \rho^{\otimes n}
|u_{\blambda,j}, v_{\blambda,j'}\rangle
\notag\\
&\Big(\frac{1}{n}
(-\log 
\langle u_{\blambda,j}, v_{\blambda,j'}| \sigma_{\blambda}\otimes \rho_{\blambda,\mix} 
|u_{\blambda,j},  v_{\blambda,j'}\rangle
\notag \\
&+
\log \langle  v_{\blambda,j'}|  \rho_{\blambda,\mix} | v_{\blambda,j'}\rangle)
-D(\rho\|\sigma) \Big)^2 \notag\\
=&\sum_{(\blambda,j,j')}
\langle u_{\blambda,j}, v_{\blambda,j'}| \rho^{\otimes n}
|u_{\blambda,j}, v_{\blambda,j'}\rangle
\notag\\
&\Big(\frac{1}{n}
(-\log 
\langle u_{\blambda,j}, v_{\blambda,j'}| \sigma^{\otimes n} 
|u_{\blambda,j},  v_{\blambda,j'}\rangle \notag\\
&+\log \langle u_{\blambda,j}, v_{\blambda,j'}| 
\sum_{\blambda'}I_{{\cal U}_{\blambda}} \otimes
 \rho_{\blambda',\mix} | u_{\blambda,j},v_{\blambda,j'}\rangle
)
\notag \\
&
-D(\rho\|\sigma)
\Big)^2 \notag\\
=&
\Tr \rho^{\otimes n}
\Big(\frac{1}{n}
(-\log \sigma^{\otimes n}+
\sum_{\blambda} \log (I_{{\cal U}_{\blambda}} \otimes \rho_{\blambda,\mix}))
\notag \\
&
-D(\rho\|\sigma)
\Big)^2 \notag\\
=&
\Tr \rho^{\otimes n}
\Big(\frac{1}{n}
(-\log \sigma^{\otimes n}+
\sum_{\blambda} \log (\rho_{\blambda} \otimes \rho_{\blambda,\mix}))
\notag \\
&
-D(\rho\|\sigma)
-\sum_{\blambda} \log (\rho_{\blambda} \otimes I_{{\cal V}_{\blambda}})
\Big)^2 \notag\\
=&
\Tr \rho^{\otimes n}
\Big(\frac{1}{n}
(-\log \sigma^{\otimes n}+\log \rho^{\otimes n})
-D(\rho\|\sigma)
\notag \\
&
-\frac{1}{n}\sum_{\blambda}
\log (\rho_{\blambda} \otimes I_{{\cal V}_{\blambda}})
\Big)^2 \notag
\end{align}
\begin{align}
=&
\Big\| \frac{1}{n}
(-\log \sigma^{\otimes n}+\log \rho^{\otimes n})
-D(\rho\|\sigma) \notag\\
&-\frac{1}{n}\sum_{\blambda}
\log (\rho_{\blambda} \otimes I_{{\cal V}_{\blambda}})
\Big \|_{\rho^{\otimes n}}^2 \notag\\
\le&
\Big(
\Big\| \frac{1}{n}
(-\log \sigma^{\otimes n}+\log \rho^{\otimes n})
-D(\rho\|\sigma) \Big\|_{\rho^{\otimes n}}
\notag\\
&+\Big\| \frac{1}{n}\sum_{\blambda}
\log (\rho_{\blambda} \otimes I_{{\cal V}_{\blambda}})
\Big\|_{\rho^{\otimes n}}
\Big)^2 .\Label{BN1T}
\end{align}
Also, we have
\begin{align}
& \Big\| \frac{1}{n}
(-\log \sigma^{\otimes n}+\log \rho^{\otimes n})
-D(\rho\|\sigma) 
\Big \|_{\rho^{\otimes n}}\notag\\
=&\Tr \rho^{\otimes n}
\Big(\frac{1}{n}
(-\log \sigma^{\otimes n}+\log \rho^{\otimes n})
-D(\rho\|\sigma)\Big)^2
\notag \\
=&\frac{1}{n}V(\rho\|\sigma)  \Label{BN2T},
\end{align}
and 
\begin{align}
&\Big\|-\frac{1}{n}\sum_{\blambda}
\log (\rho_{\blambda} \otimes I_{{\cal V}_{\blambda}})
\Big
\|_{\rho^{\otimes n}}^2
\notag\\
=&\Tr \rho^{\otimes n}
\Big(\frac{1}{n}
\sum_{\blambda} \log (\rho_{\blambda} \otimes I_{{\cal V}_{\blambda}})
\Big)^2 \notag\\
=&\frac{1}{n^2} \Tr \rho_n (\log \rho_n)^2 
\stackrel{(a)}{\le } 
\frac{1}{n^2} (\log d_{n,d} )^2,\Label{BN3T}
\end{align}
where $(a)$ follows from \cite[Lemma 8]{H-02}.
The combination of \eqref{BN1T}, \eqref{BN2T}, and \eqref{BN3T} implies 
\eqref{BNB}.

\section{Local asymptotic normality: Proof of Theorem \ref{TH3}}\Label{Sec5}
The random variable $
\sqrt{n}(-\frac{1}{n}
(
\log \langle u_{\blambda,j},v_{\blambda,j'}| \sigma^{\otimes n}|u_{\blambda,j},v_{\blambda,j'}\rangle+
\log \langle u_{\blambda,j},v_{\blambda,j'}| \rho^{\otimes n}|u_{\blambda,j},v_{\blambda,j'}\rangle)
-D(\rho\|\sigma) )$ 
has additional variable $j'$.
The joint distribution of $(\blambda,j,j')$ is given as
$\langle u_{\blambda,j}, v_{\blambda,j'}| \rho^{\otimes n}
|u_{\blambda,j}, v_{\blambda,j'}\rangle$.

We define the pinching map $\Gamma_{\sigma,n}$ as
\begin{align}
&\Gamma_{\sigma,n}(\rho')\nonumber \\
:=&\sum_{\blambda\in Y_d^n, j}
(|u_{\blambda,j}\rangle \langle u_{\blambda,j}| \otimes P( {\cal V}_{\blambda}))
\rho'\notag\\
&\cdot(|u_{\blambda,j}\rangle \langle u_{\blambda,j}| \otimes P( {\cal V}_{\blambda}))
\end{align}
for a general state $\rho'$ on $\cH^{\otimes n}$,
where $P( {\cal V}_{\blambda})$ is the projection to ${\cal V}_{\blambda}$.
Then,
this random variable coincides with $
\sqrt{n}(\frac{1}{n}
(\log \Gamma_{\sigma,n}(\rho^{\otimes n})
- \log \sigma^{\otimes n})-D(\rho\|\sigma))
$ under the state 
$\Gamma_{\sigma,n}(\rho^{\otimes n})$
studied in Section V of \cite{TH}. 
The discussion in Section V of \cite{TH} showed that 
it converges to Gaussian distribution with variance 
$V(\rho\|\sigma)$ in probability as $n \to \infty$.

In addition,
in the same way as \eqref{BN3T}, using \cite[Lemma 8]{H-02}, we have
\begin{align}
&\sum_{(\blambda,j)}
P_{\rho,\sigma}^{(n)}(\blambda,j)
(\frac{1}{\sqrt{n}}\log \langle u_{\blambda,j}| \rho_{\blambda}|u_{\blambda,j}\rangle)^2 \notag\\
=&\sum_{(\blambda,j)}
P_{\rho,\sigma}^{(n)}(\blambda,j)
(\frac{1}{\sqrt{n}}\log P_{\rho,\sigma}^{(n)}(\blambda,j))^2 \notag\\
=&\frac{1}{n}
\Tr \rho_n
(\log \rho_n)^2 
\le \frac{1}{n}(\log d_{n,d})^2 \notag\\
\le &
\frac{1}{n}(\frac{(d+2)(d-1)}{2}
\log (n+1))^2 \to 0.
\end{align}
Thus, the random variable 
$\frac{1}{\sqrt{n}}
\log \langle u_{\blambda,j}| \rho_{\blambda}|u_{\blambda,j}\rangle$
 under the distribution $P_{\rho,\sigma}^{(n)}$
 converges to zero in probability as $n \to \infty$.
Therefore, the probabilities 
$\delta_{\rho,\sigma}^{(n),\pm, c/\sqrt{n}}$
can be asymptotically
characterized by 
Gaussian distribution with variance $V(\rho\|\sigma)$.

\section{Large deviation evaluation}\Label{Sec6}
\subsection{Preparation}

Since the system ${\cal V}_{\blambda}$ has no information, it is sufficient to handle the two states
$\oplus_{\blambda \in Y_d^n}
\rho_{\blambda}$ and $
\oplus_{\blambda \in Y_d^n}
\sigma_{\blambda} 
$ over $\oplus_{\blambda \in Y_d^n}
{\cal U}_{\blambda}$.
We define the map
\begin{align}
\kappa_{\sigma,n}(\rho') 
:=&\sum_{\blambda\in Y_d^n, j}
\Tr_{{\cal V}_{\blambda}}
(|u_{\blambda,j}\rangle \langle u_{\blambda,j}| \otimes P( {\cal V}_{\blambda}))
\rho'\notag \\
&\cdot(|u_{\blambda,j}\rangle \langle u_{\blambda,j}| \otimes P( {\cal V}_{\blambda}))
\end{align}
for a general state $\rho'$ on $\cH^{\otimes n}$.
Due to the above-mentioned structure,
the states $\Gamma_{\sigma,n}(\rho^{\otimes n})$ and $\sigma^{\otimes n}$
are commutative with each other.

We have the pinching inequality:
\begin{align}
\rho'\le d_{n,d} \Gamma_{\sigma,n}(\rho').\Label{NTD}
\end{align}
In fact, applying the matrix monotone function to \eqref{NTD},
for $\alpha \in (0,1)$, we have
\begin{align}
&d_{n,d}^{\alpha-1} 
((\sigma^{\otimes n})^{\frac{1-\alpha}{2\alpha}}
\Gamma_{\sigma,n}(\rho^{\otimes n})
(\sigma^{\otimes n})^{\frac{1-\alpha}{2\alpha}})^{\alpha-1}
\nonumber \\
=&
\Big(d_{n,d}
\Gamma_{\sigma,n}\Big(
(\sigma^{\otimes n})^{\frac{1-\alpha}{2\alpha}}
\rho^{\otimes n}
(\sigma^{\otimes n})^{\frac{1-\alpha}{2\alpha}} \Big)
\Big)^{\alpha-1} \nonumber \\
\le &
\Big(
(\sigma^{\otimes n})^{\frac{1-\alpha}{2\alpha}}
\rho^{\otimes n}
(\sigma^{\otimes n})^{\frac{1-\alpha}{2\alpha}} 
\Big)^{\alpha-1}.
\Label{NTD5}
\end{align}
Thus, we have
\begin{align}
&\Tr \big(\Gamma_{\sigma,n}(\rho^{\otimes n})\big)^{\alpha} 
(\sigma^{\otimes n})^{1-\alpha}
\nonumber \\
=&
\Tr 
\Big((\sigma^{\otimes n})^{\frac{1-\alpha}{2\alpha}}
\Gamma_{\sigma,n}(\rho^{\otimes n})
(\sigma^{\otimes n})^{\frac{1-\alpha}{2\alpha}} \Big)^\alpha \nonumber \\
=&
\Tr 
\Big(\Gamma_{\sigma,n}(
(\sigma^{\otimes n})^{\frac{1-\alpha}{2\alpha}}
\rho^{\otimes n}
(\sigma^{\otimes n})^{\frac{1-\alpha}{2\alpha}})
\Big)^\alpha \nonumber \\
=&
\Tr
\Big(
\Gamma_{\sigma,n}(
(\sigma^{\otimes n})^{\frac{1-\alpha}{2\alpha}}
\rho^{\otimes n}
(\sigma^{\otimes n})^{\frac{1-\alpha}{2\alpha}}) \Big)
\nonumber \\
&\cdot
 \Big(
\Gamma_{\sigma,n}(
(\sigma^{\otimes n})^{\frac{1-\alpha}{2\alpha}}
\rho^{\otimes n}
(\sigma^{\otimes n})^{\frac{1-\alpha}{2\alpha}})
\Big)^{\alpha-1} \nonumber \\
=&
\Tr 
\Big((\sigma^{\otimes n})^{\frac{1-\alpha}{2\alpha}}
\rho^{\otimes n}
(\sigma^{\otimes n})^{\frac{1-\alpha}{2\alpha}}\Big)
\nonumber \\
&\cdot \Big(\Gamma_{\sigma,n}(
(\sigma^{\otimes n})^{\frac{1-\alpha}{2\alpha}}
\rho^{\otimes n}
(\sigma^{\otimes n})^{\frac{1-\alpha}{2\alpha}})
\Big)^{\alpha-1} \nonumber \\
\stackrel{(a)}{\le} &
d_{n,d}^{1-\alpha}
\Tr 
\Big((\sigma^{\otimes n})^{\frac{1-\alpha}{2\alpha}}
\rho^{\otimes n}
(\sigma^{\otimes n})^{\frac{1-\alpha}{2\alpha}}\Big)
\nonumber \\
& \cdot \Big(
(\sigma^{\otimes n})^{\frac{1-\alpha}{2\alpha}}
\rho^{\otimes n}
(\sigma^{\otimes n})^{\frac{1-\alpha}{2\alpha}}
\Big)^{\alpha-1}\nonumber \\
=&
d_{n,d}^{1-\alpha}
\Tr 
\Big((\sigma^{\otimes n})^{\frac{1-\alpha}{2\alpha}}
\rho^{\otimes n}
(\sigma^{\otimes n})^{\frac{1-\alpha}{2\alpha}}\Big)^{\alpha}\nonumber \\
=&
d_{n,d}^{1-\alpha}
(\Tr 
(\sigma^{\frac{1-\alpha}{2\alpha}}
\rho
\sigma^{\frac{1-\alpha}{2\alpha}})^{\alpha})^n,\Label{NMR2}
\end{align}
where Step $(a)$ follows from \eqref{NTD5}.

\subsection{Proof of Theorem \ref{TH1}}
Here, we define the projection
$\{X >Y\}$ as the projection to 
the subspace spanned by the eigenvectors 
of $X-Y$ with positive eigenvalues.
For $\alpha \in (0,1)$, we have
\begin{align}
&\sum_{(\blambda,j): x_{\blambda,j} < R}
\Tr M_{\blambda,j} \rho^{\otimes n} \notag\\
=&
\Tr 
\kappa_{\sigma,n}(\rho^{\otimes n})
\{ \kappa_{\sigma,n}(\sigma^{\otimes n}) > e^{-nR}
\} \notag\\
\stackrel{(b)}{\le} &
\Tr 
\kappa_{\sigma,n}(\rho^{\otimes n})
\{ \kappa_{\sigma,n}(\sigma^{\otimes n}) > e^{-nR}
\kappa_{\sigma,n}(\rho^{\otimes n})\} \notag\\
\stackrel{(c)}{\le} &
\Tr 
\kappa_{\sigma,n}(\rho^{\otimes n})^{1-\alpha}
\kappa_{\sigma,n}(\sigma^{\otimes n})^{\alpha} e^{n \alpha R}
\notag\\
= &
\Tr 
\Gamma_{\sigma,n}(\rho^{\otimes n})^{1-\alpha}
(\sigma^{\otimes n})^{\alpha} e^{n \alpha R}
\notag\\
\stackrel{(d)}{\le} &
d_{n,d}^{\alpha} (\Tr (\sigma^{\frac{\alpha}{2(1-\alpha)}}
\rho \sigma^{\frac{\alpha}{2(1-\alpha)}})^{1-\alpha} e^{ \alpha R})^n\notag\\
=& d_{n,d}^{\alpha}e^{-n \alpha (D_{1-\alpha}(\rho\|\sigma)-R)}.
\end{align}
Here, $(b)$ follows from 
$\kappa_{\sigma,n}(\rho^{\otimes n})\le I$, and
$(d)$ follows from \eqref{NMR2}.
The relation $(c)$ follows from
\begin{align}
&\Tr 
\kappa_{\sigma,n}(\rho^{\otimes n})
\{ \kappa_{\sigma,n}(\sigma^{\otimes n}) > e^{-nR}
\kappa_{\sigma,n}(\rho^{\otimes n})\} \notag\\
=&
\Tr 
\kappa_{\sigma,n}(\rho^{\otimes n})
\{ 
e^{\alpha nR}
(\kappa_{\sigma,n}(\sigma^{\otimes n}))^\alpha  
> 
(\kappa_{\sigma,n}(\rho^{\otimes n}))^\alpha 
\} \notag\\
\stackrel{(e)}{\le} &
\Tr 
\kappa_{\sigma,n}(\rho^{\otimes n})^{1-\alpha}
(\kappa_{\sigma,n}(\sigma^{\otimes n}))^{\alpha} 
e^{n \alpha R} \notag\\
&\cdot \{ 
e^{\alpha nR}
(\kappa_{\sigma,n}(\sigma^{\otimes n}))^\alpha  
> 
(\kappa_{\sigma,n}(\rho^{\otimes n}))^\alpha 
\} \notag\\
{\le} &
\Tr 
\kappa_{\sigma,n}(\rho^{\otimes n})^{1-\alpha}
(\kappa_{\sigma,n}(\sigma^{\otimes n}))^{\alpha} e^{n \alpha R},
\end{align}
where $(e)$ follows from the following relation.
\begin{align}
\Tr &\Big[
(\kappa_{\sigma,n}(\rho^{\otimes n}))^{1-\alpha}
\notag\\
&\cdot\Big(
(\kappa_{\sigma,n}(\rho^{\otimes n}))^\alpha
-
(\kappa_{\sigma,n}(\sigma^{\otimes n}))^{\alpha} 
e^{n \alpha R} \Big)\notag\\
&\cdot \{ 
e^{\alpha nR}
(\kappa_{\sigma,n}(\sigma^{\otimes n}))^\alpha  
> 
(\kappa_{\sigma,n}(\rho^{\otimes n}))^\alpha 
\}\Big] \le 0.
\end{align}

For $\alpha >0$, we have
\begin{align}
&\sum_{(\blambda,j): x_{\blambda,j} > R}
\Tr M_{\blambda,j} \rho^{\otimes n} \notag\\
=&
\Tr 
\kappa_{\sigma,n}(\rho^{\otimes n})
\{ \kappa_{\sigma,n}(\sigma^{\otimes n}) < e^{-nR}
\} \notag\\
\le &
\Tr 
\kappa_{\sigma,n}(\rho^{\otimes n})
\{ \kappa_{\sigma,n}(\sigma^{\otimes n}) < e^{-nR}\} 
\{\kappa_{\sigma,n}(\rho^{\otimes n}) > e^{-nr} \} \notag\\
&+
\Tr 
\kappa_{\sigma,n}(\rho^{\otimes n})
\{\kappa_{\sigma,n}(\rho^{\otimes n}) \le e^{-nr} \} 
\notag\\
\le &
\Tr 
\kappa_{\sigma,n}(\rho^{\otimes n})
\{ \kappa_{\sigma,n}(\sigma^{\otimes n}) < e^{-nR+nr}
\kappa_{\sigma,n}(\rho^{\otimes n})\} \notag\\
&+
\Tr 
\kappa_{\sigma,n}(\rho^{\otimes n})
\{\kappa_{\sigma,n}(\rho^{\otimes n}) \le e^{-nr} \} 
\notag\\
\le &
\Tr 
\kappa_{\sigma,n}(\rho^{\otimes n})^{1+\alpha}
\kappa_{\sigma,n}(\sigma^{\otimes n})^{-\alpha} e^{-n \alpha (R-r)}
+d_{n,d} e^{-nr} 
\notag\\
= &
\Tr 
\Gamma_{\sigma,n}(\rho^{\otimes n})^{1+\alpha}
(\sigma^{\otimes n})^{-\alpha} e^{-n \alpha (R-r)}
+d_{n,d} e^{-nr} 
\notag\\
\stackrel{(f)}{\le} &
\Tr ( (\sigma^{\otimes n})^{-\frac{\alpha}{2(1+\alpha)}
\rho^{\otimes n} (\sigma^{\otimes n})^{-\frac{\alpha}{2(1+\alpha)}})^{1+\alpha} e^{- \alpha (R-r) n}}
\notag\\
&+d_{n,d} e^{-nr} 
\notag\\
= &
(\Tr (\sigma^{-\frac{\alpha}{2(1+\alpha)}}
\rho \sigma^{-\frac{\alpha}{2(1+\alpha)}})^{1+\alpha} e^{- \alpha (R-r)})^n
+d_{n,d} e^{-nr} \notag\\
=& e^{-n (R-r-\alpha (D_{1+\alpha}(\rho\|\sigma) )}
+d_{n,d} e^{-nr} .\,
\end{align}
where $(f)$ follows from the monotonicity of sandwitch R\'{e}nyi relative entropy.

\bibliographystyle{quantum}
\bibliography{references}

\end{document}